\documentclass[a4paper,10pt]{article}
\usepackage{a4wide}
\usepackage{natbib}
\usepackage{float}       % permet, avec [H] de mettre l'image ICI!!! (et pas au petit bonheur la chance LaTeXien)
\usepackage{amsfonts}
\usepackage{amssymb}
\usepackage{amsmath}
\usepackage{amsfonts}
\usepackage{mathrsfs}
\usepackage{amssymb}
\usepackage[latin1]{inputenc}
\usepackage{lmodern}
\usepackage{natbib}
\usepackage[T1]{fontenc}
\usepackage{graphicx} % pour les figures
\usepackage{amsmath}  % pour les eqnarray
\usepackage{float}       % permet, avec [H] de mettre l'image ICI!!! (et pas au petit bonheur la chance LaTeXien)
\usepackage{enumerate}
\usepackage{bm}
\usepackage{amsthm}
\usepackage{color}

\newcommand{\wto}{\xrightarrow{w}}
\newcommand{\dto}{\rightsquigarrow}

\newtheorem{theorem}{Theorem}

\begin{document}
\author{Miguel de Carvalho\footnote{Ecole Polytechnique F\'ed\'erale de Lausanne,
  Switzerland (Miguel.Carvalho@epfl.ch).}~\footnote{Universidade Nova de Lisboa, Centro
de Matem\'atica e Aplica\c c\~oes, Portugal}, Boris Oumow$^{*}$,
Johan Segers\footnote{Universit\'e catholique de Louvain, Belgium},
and Micha\l~Warcho\l\footnote{Uniwersytet Jagiello\'nski, Poland}
%   \vspace{1.0cm} \\ \\ 
% \small{E-mail: \emph{Miguel.Carvalho@epfl.ch,}
%   \emph{Boris.Oumow@epfl.ch}, \emph{Michal.Warchol@epfl.ch}}
}
\date{\today}
\title{A Euclidean Likelihood Estimator for Bivariate Tail Dependence}
\maketitle
% \hspace{4.6cm}
% This Draft: November 11, 2011

\begin{abstract}
The spectral measure plays a key role in the statistical modeling of multivariate extremes. Estimation of the spectral measure is a complex issue, given the need to obey a certain moment condition. We propose a Euclidean likelihood-based estimator for the spectral measure which is simple and explicitly defined, with its expression being free of Lagrange multipliers. Our estimator is shown to have the same limit distribution as the maximum empirical likelihood estimator of J.~H.~J. Einmahl and J. Segers, Annals of Statistics 37(5B), 2953--2989 (2009). Numerical experiments suggest an overall good performance and identical behavior to the maximum empirical likelihood estimator. We illustrate the method in an extreme temperature data analysis.
\end{abstract}

\paragraph{Keywords:} Bivariate extremes; Empirical likelihood; Euclidean likelihood; Spectral measure; Statistics of extremes. 

\section{Introduction}
When modeling dependence for bivariate extremes, only an infinite-dimensional object is flexible enough to capture the `spectrum' of all possible types of dependence. One of such infinite-dimensional objects is the spectral measure, describing the limit distribution of the relative size of the two components in a vector, normalized in a certain way, given that at least one of them is large; see, for instance, \citet[][\S 3]{Kotz:Nada:extr:2000} and \citet[][\S 8--9]{BGST}. The normalization of the components induces a moment constraint on the spectral measure, making its estimation a nontrivial task. 

In the literature, a wide range of approaches has been proposed. \citet[][\S 2--3]{Kotz:Nada:extr:2000} survey many parametric models for the spectral measure, and new models continue to be invented \citep{Cool:Davi:Nave:pair:2010,Bold:Davi:mixt:2007,ballani:schlather:2011}. Here we are mostly concerned  with semiparametric and nonparametric approaches. \cite{Einmahl:2009p327} propose an enhancement of the empirical spectral measure in \cite{Einm:deH:Pite:nonp:2001} by enforcing the moment constraints with empirical likelihood methods. A nonparametric Bayesian method based on the censored-likelihood approach in \cite{Ledf:Tawn:stat:1996} is proposed in \cite{guillotte2011non}.

In this paper we introduce a Euclidean likelihood-based estimator related with the maximum empirical likelihood estimator of \cite{Einmahl:2009p327}. Our estimator replaces the empirical likelihood objective function by the Euclidean distance between the barycenter of the unit simplex and the vector of probability masses of the spectral measure at the observed pseudo-angles \citep{Owen:empi:1991,owen2001empirical,Crpe:Hara:Tres:usin:2009}. This construction allows us to obtain an empirical likelihood-based estimator which is simple and explicitly defined. Its expression is free of Lagrange multipliers, which not only simplifies computations but also leads to a more manageable asymptotic theory. We show that the limit distribution of the empirical process associated with the maximum Euclidean likelihood estimator measure is the same as the one of the maximum empirical likelihood estimator in \cite{Einmahl:2009p327}. Note that standard large-sample results for Euclidean likelihood methods \citep{Xu:larg:1995,Lin:Zhan:bloc:2001} cannot be applied in the context of bivariate extremes.

The paper is organized as follows. In the next section we discuss the probabilistic and geometric frameworks supporting models for bivariate extremes. In Section~\ref{sec:estim} we introduce the maximum Euclidean likelihood estimator for the spectral measure. Large-sample theory is provided in Section~\ref{sec:asym}. Numerical experiments are reported in Section~\ref{numerical.study} and an illustration with extreme temperature data is given in Section~\ref{sec:temp}. Proofs and some details on a smoothing procedure using Beta kernels are given in the Appendix~\ref{app:proofs} and~\ref{app:smooth}, respectively.

\section{Background}
\label{sec:background}

Let $(X_1, Y_1), (X_2, Y_2), \ldots$ be independent and identically distributed bivariate random vectors with continuous marginal distributions $F_X$ and $F_Y$. For the purposes of studying extremal dependence, it is convenient to standardize the margins to the unit Pareto distribution via $X_i^* = 1 / \{ 1 - F_X(X_i) \}$ and $Y_i^* = 1 / \{ 1 - F_Y(Y_i) \}$. Observe that $X_i^*$ exceeds a threshold $t > 1$ if and only if $X_i$ exceeds its tail quantile $F_X^{-1}(1 - 1/t)$; similarly for $Y_i^*$. The transformation to unit Pareto distribution serves to measure the magnitudes of the two components according to a common scale which is free from the actual marginal distributions.

Pickands' representation theorem \citep{Pick:mult:1981} asserts that if the vector of rescaled, componentwise maxima
\[
  \bm{M}_n^* = \biggl( \frac{1}{n} \max_{i=1,\ldots,n} X_i^*, \frac{1}{n}  \max_{i=1\ldots,n} Y_i^* \biggr),
\]
converges in distribution to a non-degenerate limit, then the limiting distribution is a bivariate extreme value distribution $G$ with unit-Fr\'{e}chet margins given by
\begin{equation}
\label{BEV}
  G(x,y)=\exp\bigg\{ -2 \int_{[0, 1]} \max \left( \frac{w}{x},\frac{1-w}{y}  \right) \mathrm{d}H(w)  \bigg\}, \qquad x,y>0.
\end{equation}
The spectral (probability) measure $H$ is a probability distribution on $[0, 1]$ that is arbitrary apart from the moment constraint
\begin{equation}
\label{constraints}
  \int_{[0,1]} w \, \mathrm{d}H(w) = 1/2,
\end{equation}
induced by the marginal distributions $G(z, \infty) = G(\infty, z) = \exp(-1/z)$ for $z > 0$.

The spectral measure $H$ can be interpreted as the limit distribution of the pseudo-angle $W_i = X_i^* / (X_i^* + Y_i^*)$ given that the pseudo-radius $R_i = X_i^* + Y_i^*$ is large. Specifically, weak convergence of $\bm{M}_n^*$ to $G$ is equivalent to
\begin{equation}
\label{eq:H}
  \Pr[ W_i \in \, \cdot \, \mid R_i > t ] \wto H(\,\cdot\,), \qquad t \to \infty.
\end{equation}
The pseudo-angle $W_i$ is close to $0$ or to $1$ if one of the components $X_i^*$ or $Y_i^*$ dominates the other one, given that at least one of them is large. Conversely, the pseudo-angle $W_i$ will be close to $1/2$ if both components $X_i^*$ and $Y_i^*$ are of the same order of magnitude. In case of asymptotic dependence, $G(x, y) = \exp(-1/x-1/y)$, the spectral measure puts mass $1/2$ at the atoms $0$ and $1$, whereas in case of complete asymptotic dependence $G(x, y) = \exp\{-1/\min(x, y)\}$, the spectral measure $H$ reduces to a unit point mass at $1/2$.

Given a sample $(X_1, Y_1), \ldots, (X_n, Y_n)$, we may construct proxies for the unobservable pseudo-angles $W_i$ by setting
\begin{align*}
  \hat{W}_i &= \hat{X}_i^* / ( \hat{X}_i^* + \hat{Y}_i^* ), &
  \hat{R}_i &= \hat{X}_i^* + \hat{Y}_i^*,
\end{align*}
where $\hat{X}_i^* = 1 / \{ 1 - \hat{F}_X(X_i) \}$ and $\hat{Y}_i^* = 1 / \{ 1 - \hat{F}_Y(Y_i) \}$ and where $\hat{F}_X = \hat{F}_{X,n}$ and $\hat{F}_Y = \hat{F}_{Y,n}$ are estimators of the marginal distribution functions $F_X$ and $F_Y$. A robust choice for $\hat{F}_X$ and $\hat{F}_Y$ is the pair of univariate empirical distribution functions, normalized by $n+1$ rather than by $n$ to avoid division by zero. In this case, $\hat{X}_i^*$ and $\hat{Y}_i^*$ are functions of the ranks.

For a high enough threshold $t = t_n$, the collection of angles $\{ \hat{W}_i : i \in K \}$ with $K = K_n = \{ i = 1, \ldots, n : \hat{R}_i > t \}$ can be regarded as if it were a sample from the spectral measure $H$. Parametric or nonparametric inference on $H$ may then be based upon the sample $\{ \hat{W}_i : i \in K \}$. Nevertheless, two complications occur:
\begin{enumerate}
\item The choice of the threshold $t$ comes into play both through the rate of convergence in \eqref{eq:H} and through the effective size $|K| = k$ of the sample of pseudo-angles.
\item The standardization via the estimated margins induces dependence between the pseudo-angles, even when the original random vectors $(X_i, Y_i)$ are independent.
\end{enumerate}

For the construction of estimators of the spectral measure, we may thus pretend that $\{ \hat{W}_i : i \in I \}$ constitutes a sample from $H$. However, for the theoretical analysis of the resulting estimators, the two issues above must be taken into consideration. Failure to do so will lead to a wrong assessment of both the bias and the standard errors of estimators of extremal dependence.

\section{Maximum Euclidean Likelihood Estimator}
\label{sec:estim}

We propose to use Euclidean likelihood methods \citep[pp.~63--66]{owen2001empirical}  to estimate the spectral measure. Let $w_1, \ldots, w_k \in [0, 1]$ be a sample of pseudo-angles, for example the observed values of the random variables $\hat{W}_i$, $i \in K$, in the previous section, with $k = |K|$. The Euclidean loglikelihood $\ell_{\text{E}}$ ratio for a candidate spectral measure $H$ supported on $\{w_1, \ldots, w_k\}$ and assigning probability mass $p_i = H(\{w_i\})$ to $w_i$ is formally defined as 
\[
  \ell_{\text{E}}(\bm{p}) = -\frac{1}{2} \sum_{i=1}^k (k p_i - 1)^2.
\]
The Euclidean likelihood ratio can be viewed as a Euclidean measure of the distance of $\bm{p} = (p_1,\hdots,p_k)$ to the barycenter $(k^{-1},\hdots,k^{-1})$ of the $(k-1)$-dimensional unit simplex. In this sense, the Euclidean likelihood ratio is similar to the empirical loglikelihood ratio 
\[
  \ell(\bm{p}) = \sum_{i=1}^k \log(k p_i),
\]
which can be understood as another measure of the distance from $\bm{p}$ to $(k^{-1},\hdots,k^{-1})$. Note that $\ell_{\text{E}}(\bm{p})$ results from $\ell(\bm{p})$ by truncation of the Taylor expension $\log(1+x) = x - x^2/2 + \cdots$ and the fact that $p_1 + \cdots + p_k = 1$, making the linear term in the expansion disappear.

We seek to maximize $\ell_{\text{E}}(\bm{p})$ subject to the empirical version of the moment constraint \eqref{constraints}. Our estimator $\hat{H}$ for the distribution function of the spectral measure is defined as 
\begin{equation}
\label{eq:mEle:H}
  \hat{H}(w) = \sum_{i=1}^k \hat{p}_i \, I(w_i \le w), \qquad w \in [0, 1],
\end{equation}
the vector of probability masses $\hat{\bm{p}} = (\hat{p}_1, \ldots, \hat{p}_k)$ solving the optimization problem 
\begin{equation}
\label{eq:quadprog}
  \begin{array}{rl}
    \underset{\bm{p} \in \mathbb{R}^k}{\max}        &-\frac{1}{2} \sum_{i=1}^k (k p_i - 1)^2 \\
    \text{s.t.} & \sum_{i=1}^k p_i = 1 \\
                & \sum_{i=1}^k w_i p_i = 1/2.
  \end{array}
\end{equation}
This quadratic program with linear constraints can be solved explicitly with the method of Lagrange multipliers, yielding
\begin{equation}
\label{eq:mEle:p}
  \hat{p}_i = \frac{1}{k} \bigl\{ 1- (\overline{w} - 1/2)S^{-2} (w_i - \overline{w}) \bigr\}, \qquad i = 1,\hdots,k,
\end{equation}
where $\overline{w}$ and $S^2$ denote the sample mean and sample variance of $w_1, \ldots, w_k$, that is,
\begin{align*}
  \overline{w} &= \frac{1}{k} \sum_{i=1}^k w_i, & S^2 &= \frac{1}{k} \sum_{i=1}^k (w_i - \overline{w})^2.
\end{align*}
The weights $\hat{p}_i$ could be negative, but our numerical experiments suggest that this is not as problematic as it may seem at first sight, in agreement with 
\cite{Anto:Bonn:Rena:on:2007} and \cite{Crpe:Hara:Tres:usin:2009}, who claim that the weights $p_i$ are nonnegative with probability tending to one. The second equality constraint in \eqref{eq:quadprog} implies that $\hat{H}$ satisfies the moment constraint \eqref{constraints}, as $\int_{[0, 1]} w \, \mathrm{d} \hat{H}(w) = \sum_i w_i \hat{p}_i = 1/2$, which can be easily verified directly.

The empirical spectral measure estimator of \cite{Einm:deH:Pite:nonp:2001} and the maximum empirical likelihood estimator of \cite{Einmahl:2009p327} can be constructed as in \eqref{eq:quadprog} through suitable changes in the objective function and the constraints. The empirical spectral measure $\dot{H}(w) = \sum_i \dot{p}_i \, I(w_i \le w)$ solves the optimization problem 
\[ \begin{array}{rl}
    \underset{\bm{p} \in \mathbb{R}^k_+}{\max}        &\sum_{i=1}^k
    \log p_i \\
    \mbox{s.t.} & \sum_{i=1}^k p_i = 1.                
    \end{array}
\]
yielding $\dot{p}_i = 1/k$ for $i = 1, \ldots, k$. In contrast, the maximum empirical likelihood estimator $\ddot{H}(w) = \sum_i \ddot{p}_i \, I(w_i \le w)$ has probability masses given by the solution of
\begin{equation} \label{optim_mele}
  \begin{array}{rl}
    \underset{\bm{p} \in \mathbb{R}^k_+}{\max}        &\sum_{i=1}^k \log p_i \\
    \mbox{s.t.} & \sum_{i=1}^k p_i = 1 \\
                & \sum_{i=1}^k w_i p_i = 1/2.
    \end{array}
\end{equation}
By the method of Lagrange multipliers, the solution is given by
\[\label{tilde.p_i}
  \ddot{p}_i = \frac{1}{k} \frac{1}{1+ \lambda (w_i-1/2)}, \qquad i = 1,\hdots,k,
\]
where $\lambda \in \mathbb{R}$ is the Lagrange multiplier associated to the second equality constraint in \eqref{optim_mele}, defined implicitly as the solution to the equation
\[
  \frac{1}{k} \sum_{i=1}^k \frac{w_i-1/2}{{1+ \lambda (w_i-1/2)}}=0,
\]
see also \cite{Qin:Lawl:empi:1994}.

\section{Large-Sample Theory}
\label{sec:asym}

The maximum Euclidean likelihood estimator $\hat{H}$ in \eqref{eq:mEle:H} can be expressed in terms of the empirical spectral measure $\dot{H}$ given by
\[ 
  \dot{H}(w) = \frac{1}{k} \sum_{i=1}^k I(w_i \le w), \qquad w \in [0, 1]. 
\]
Indeed, $\overline{w}$ and $S^2$ are just the mean and the variance of $\dot{H}$, and the expression \eqref{eq:mEle:p} for the weights $\hat{p}_i$ can be written as
\[
  \frac{\mathrm{d} \hat{H}}{\mathrm{d} \dot{H}}(v) = 1 - (\overline{w} - 1/2) \, S^{-2} \, (v - \overline{w}), \qquad v \in [0, 1].
\]
Integrating out this `likelihood ratio' over $v \in [0, w]$ yields the identity
\[
  \hat{H}(w) = (\Phi(\dot{H}))(w), \qquad w \in [0, 1],
\]
where the transformation $\Phi$ is defined as follows. Let $\mathbb{D}_\Phi$ be the set of cumulative distribution functions of non-degenerate probability measures on $[0, 1]$. For $F \in \mathbb{D}_\Phi$, the function $\Phi(F)$ on $[0, 1]$ is defined by
\[
  (\Phi(F))(w) = F(w) - (\mu_F - 1/2) \, \sigma_F^{-2} \, \int_{[0, w]} (v - \mu_F) \, \mathrm{d} F(v), \qquad w \in [0, 1].
\]
Here $\mu_F = \int_{[0, 1]} v \, \mathrm{d} F(v)$ and $\sigma_F^2 = \int_{[0, 1]} (v - \mu_F)^2 \, \mathrm{d} F(v)$ denote the mean and the (non-zero) variance of $F$. 

We view $\mathbb{D}_\Phi$ as a subset of the Banach space $\ell^\infty([0, 1])$ of bounded, real-valued functions on $[0, 1]$ equipped with the supremum norm $\| \,\cdot \, \|_\infty$. The map $\Phi$ takes values in $\ell^\infty([0, 1])$ as well. Weak convergence in $\ell^\infty([0, 1])$ is denoted by the arrow `$\dto$' and is to be understood as in \cite{vdv:wellner:1996}. 

Asymptotic properties of the empirical spectral measure together with smoothness properties of $\Phi$ lead to asymptotic properties of the maximum Euclidean likelihood estimator:
\begin{itemize} 
\item Continuity of the map $\Phi$ together with consistency of the empirical spectral measure yields consistency of the maximum Euclidean likelihood estimator (continuous mapping theorem).
\item Hadamard differentiability of the map $\Phi$ together with asymptotic normality of the empirical spectral measure yields asymptotic normality of the maximum Euclidean likelihood estimator (functional delta method).
\end{itemize}

The following theorems are formulated in terms of maps $\dot{H}_n$ taking values in $\mathbb{D}_\Phi$. The case to have in mind is the empirical spectral measure $\dot{H}_n(w) = k_n^{-1} \sum_{i \in K_n} I( \hat{W}_i \le w )$ with $\{ \hat{W}_i : i \in K_n \}$ and $k_n = |K_n|$ as in Section~\ref{sec:background}. In Theorem~3.1 and equation~(7.1) of \cite{Einmahl:2009p327}, asymptotic normality of $\dot{H}_n$ is established under certain smoothness conditions on $H$ and growth conditions on the threshold sequence $t_n$.

\begin{theorem}[Consistency]
\label{thm:consistency}
If $\dot{H}_n$ are maps taking values in $\mathbb{D}_\Phi$ and if $\| \dot{H}_n - H \|_\infty \to 0$ in outer probability for some nondegenerate spectral measure $H$, then, writing $\hat{H}_n = \Phi( \dot{H}_n )$, we also have $\| \hat{H}_n - H \|_\infty \to 0$ in outer probability.
\end{theorem}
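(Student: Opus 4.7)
The plan is to apply the continuous mapping theorem for convergence in outer probability (van der Vaart and Wellner, \S1.9--1.11): since $\|\dot{H}_n - H\|_\infty \to 0$ in outer probability, it is enough to verify that (i) $\Phi(H) = H$ and (ii) $\Phi : \mathbb{D}_\Phi \to \ell^\infty([0,1])$ is continuous at $H$ with respect to the supremum norm. Assertion (i) is immediate from the moment constraint \eqref{constraints}: it forces $\mu_H = 1/2$, so the corrective term in the definition of $\Phi(H)$ vanishes and $(\Phi(H))(w) = H(w)$.

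For (ii), I would decompose, for $F \in \mathbb{D}_\Phi$,
\[
  \Phi(F)(w) - H(w) \;=\; \bigl(F(w) - H(w)\bigr) \;-\; (\mu_F - 1/2)\,\sigma_F^{-2}\, m_F(w),
\]
where $m_F(w) = \int_{[0,w]}(v - \mu_F)\,\mathrm{d}F(v)$, and then show that the three auxiliary functionals $F \mapsto \mu_F$, $F \mapsto \sigma_F^2$, and $F \mapsto m_F$ (the last viewed as a map into $\ell^\infty([0,1])$) are all Lipschitz in sup norm on cdfs supported in $[0,1]$. This rests on the integration-by-parts identities
\[
  \mu_F = \int_0^1 \bigl(1 - F(v)\bigr)\,\mathrm{d}v, \qquad \int_{[0,w]} v\,\mathrm{d}F(v) = wF(w) - \int_0^w F(v)\,\mathrm{d}v,
\]
and the analogous identity for $\int v^2\,\mathrm{d}F(v)$. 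Each yields a Lipschitz bound with a constant depending only on the fact that $[0,1]$ is the support; crucially, for $m_F$ the Lipschitz constant is independent of $w$, giving uniform control.

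Putting the pieces together produces a bound of the form
\[
  \|\Phi(F) - H\|_\infty \;\le\; \|F - H\|_\infty \;+\; |\mu_F - 1/2|\,\sigma_F^{-2}\,\sup_{w \in [0,1]} |m_F(w)|.
\]
As $F$ approaches $H$ in sup norm, $\mu_F \to \mu_H = 1/2$ drives the second term to zero, while $\sigma_F^{-2}$ stays bounded thanks to the nondegeneracy assumption $\sigma_H^2 > 0$ (Lipschitz continuity of $F\mapsto \sigma_F^2$ plus the continuous mapping $x\mapsto x^{-1}$ away from zero), and $\sup_w |m_F(w)|$ stays bounded because the integrand is bounded by $1$. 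This establishes continuity of $\Phi$ at $H$, and the continuous mapping theorem delivers $\|\hat{H}_n - H\|_\infty \to 0$ in outer probability.

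The main obstacle is not any single deep step but rather bookkeeping: one must check that the Lipschitz constants for $\mu_F$, $\sigma_F^2$, and especially $m_F$ really are independent of $w \in [0,1]$, and that $\sigma_F^{-2}$ is controlled on a neighbourhood of $H$ (which is ensured because $\sigma_H^2 > 0$). Everything else is an application of the continuous mapping theorem in $\ell^\infty([0,1])$.
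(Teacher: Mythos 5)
Your proposal is correct and follows essentially the same route as the paper: continuity of $\Phi$ is obtained from the same integration-by-parts (Fubini) identities for $\mu_F$, $\sigma_F^2$ and $\int_{[0,w]} v\,\mathrm{d}F(v)$, combined with $\Phi(H)=H$ (from the moment constraint) and the continuous mapping theorem in $\ell^\infty([0,1])$. The only cosmetic difference is that you verify continuity of $\Phi$ only at the point $H$, which indeed suffices here, whereas the paper notes continuity on all of $\mathbb{D}_\Phi$.
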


The proof of this and the next theorem is given in Appendix~\ref{app:proofs}. In the next theorem, the rate sequence $r_n$ is to be thought of as $\sqrt{k_n}$. Let $\mathcal{C}([0, 1])$ be the space of continuous, real-valued functions on $[0, 1]$.

\begin{theorem}[Asymptotic normality]
\label{thm:an}
Let $\dot{H}_n$ and $H$ be as in Theorem~\ref{thm:consistency}. If $H$ is continuous and if
\[
  r_n (\dot{H}_n - H ) \dto \beta, \qquad n \to \infty,
\]
in $\ell^\infty([0, 1])$, with $0 < r_n \to \infty$ and with $\beta$ a random element of $\mathcal{C}([0, 1])$, then also
\[ 
  r_n ( \hat{H}_n - H) \dto \gamma, \qquad n \to \infty
\]
with
\begin{equation}
\label{eq:gamma}
  \gamma(w) = \beta(w) - \sigma_H^{-2} \, \int_0^1 \beta(v) \, \mathrm{d}v \, \int_0^w (1/2 - v) \, \mathrm{d}H(v), \qquad w \in [0, 1].
\end{equation}
\end{theorem}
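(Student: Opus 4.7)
My plan is to apply the functional delta method to the map $\Phi$ defined in Section~\ref{sec:asym}, building on the assumed weak convergence $r_n(\dot H_n - H) \dto \beta$. The key preliminary observation is that since $\int_{[0,1]} v \, dH(v) = 1/2$ by the moment constraint~\eqref{constraints}, the nonlinear correction in the definition of $\Phi$ vanishes at $F = H$, so $\Phi(H) = H$ and $r_n(\hat H_n - H) = r_n(\Phi(\dot H_n) - \Phi(H))$. The task therefore reduces to showing that $\Phi$ is Hadamard differentiable at $H$, tangentially to the subspace $\mathcal{C}([0,1])$, with derivative $\beta \mapsto \gamma$ given by~\eqref{eq:gamma}.

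To verify Hadamard differentiability, I would let $t_n \downarrow 0$ and $\beta_n \to \beta$ uniformly on $[0,1]$ with $F_n := H + t_n \beta_n \in \mathbb{D}_\Phi$, and expand the three building blocks of $\Phi$ in turn. Using the identity $\int_{[0,1]} v \, dF(v) = 1 - \int_0^1 F(v) \, dv$, valid for any CDF on $[0,1]$, one obtains exactly $\mu_{F_n} - 1/2 = -t_n \int_0^1 \beta_n(v) \, dv$, which tends to $-\int_0^1 \beta(v) \, dv$ by uniform convergence. The variance satisfies $\sigma_{F_n}^2 \to \sigma_H^2 > 0$ since $H$ is nondegenerate. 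Finally, the analogous identity $\int_{[0,w]} v \, dF(v) = wF(w) - \int_0^w F(v) \, dv$ shows that $\int_{[0,w]} (v - \mu_{F_n}) \, dF_n(v) = \int_{[0,w]} (v - 1/2) \, dH(v) + O(t_n)$ uniformly in $w$.

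Plugging these expansions into
\[
  \Phi(F_n)(w) - H(w) = t_n \beta_n(w) - (\mu_{F_n} - 1/2) \, \sigma_{F_n}^{-2} \int_{[0,w]} (v - \mu_{F_n}) \, dF_n(v),
\]
dividing by $t_n$, and passing to the limit yields the Hadamard derivative
\[
  \Phi'_H(\beta)(w) = \beta(w) - \sigma_H^{-2} \int_0^1 \beta(v) \, dv \int_0^w (1/2 - v) \, dH(v),
\]
which coincides with $\gamma$ in~\eqref{eq:gamma}. The functional delta method (for instance, Theorem~3.9.4 in \citet{vdv:wellner:1996}) then delivers $r_n(\hat H_n - H) \dto \Phi'_H(\beta) = \gamma$ in $\ell^\infty([0,1])$, where it is crucial that the tangential condition $\beta \in \mathcal{C}([0,1])$ is met by hypothesis.

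The main obstacle I anticipate is verifying that the three expansions above hold \emph{uniformly} in $w \in [0,1]$, so that the remainder term is $o(t_n)$ in sup norm rather than merely pointwise. Continuity of $H$ ensures that the partial-integral functional $w \mapsto \int_{[0,w]} (v - \mu_F) \, dF(v)$ is continuous in $w$, which combined with $\|\beta_n - \beta\|_\infty \to 0$ lets one control the cross-terms in the product $(\mu_{F_n} - 1/2) \, \sigma_{F_n}^{-2} \int_{[0,w]} (v - \mu_{F_n}) \, dF_n(v)$ uniformly. The other algebraic manipulations are routine once the uniform-in-$w$ bookkeeping is in place.
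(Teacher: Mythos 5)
Your proposal is correct and follows essentially the same route as the paper: the paper packages the argument via the extended continuous mapping theorem applied to $g_n(f) = r_n\{\Phi(H + r_n^{-1}f) - H\}$, but explicitly notes that this amounts to proving Hadamard differentiability of $\Phi$ at $H$ tangentially to $\mathcal{C}([0,1])$ with derivative $g = \Phi_H'$ and invoking the functional delta method, which is exactly what you do. Your expansions of $\mu_{F_n}$, $\sigma_{F_n}^2$ and the partial integral (including the implicit use of $F_n(1)=1$, i.e.\ $f_n(1)=0$, in the identity $\mu_{F_n}-1/2 = -t_n\int_0^1 \beta_n(v)\,\mathrm{d}v$) are precisely the ``straightforward computation'' the paper leaves to the reader.
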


Comparing the expression for $\gamma$ in \eqref{eq:gamma} with the one for $\gamma$ in (4.7) in \cite{Einmahl:2009p327}, we see that the link between the processes $\beta$ and $\gamma$ here is the same as the one between the processes $\beta$ and $\gamma$ in \cite{Einmahl:2009p327}. It follows that tuning the empirical spectral measure via either maximum empirical likelihood or maximum Euclidean likelihood makes no difference asymptotically. The numerical experiments below confirm this asymptotic equivalence. To facilitate comparisons with \cite{Einmahl:2009p327}, note that our pseudo-angle $w \in [0, 1]$ is related to their radial angle $\theta \in [0, \pi/2]$ via $\theta = \arctan \{ w / (1-w) \}$, and that the function $f$ in (4.2) in \cite{Einmahl:2009p327} reduces to $f(\theta) = (\sin \theta - \cos \theta)/(\sin \theta + \cos \theta) = 2w - 1$.

How does the additional term $\sigma_H^{-2} \, \int_0^1 \beta(v) \, \mathrm{d}v \, \int_0^w (1/2 - v) \, \mathrm{d}H(v)$ influence the asymptotic distribution of the maximum Euclidean/empirical estimator? Given the complicated nature of the covariance function of the process $\beta$, see (3.7) and (4.7) in \cite{Einmahl:2009p327}, it is virtually impossible to draw any conclusions theoretically. However, Monte Carlo simulations in \citet[][\S 5.1]{Einmahl:2009p327} confirm that the maximum empirical likelihood estimator is typically more efficient than the ordinary empirical spectral measure. These findings are confirmed in the next section.

\section{Monte Carlo Simulations} 
\label{numerical.study} 

In this section, the maximum Euclidean likelihood estimator is compared with the empirical spectral measure and the maximum empirical likelihood estimator by means of Monte Carlo simulations. The comparisons are made on the basis of the mean integrated squared error,
\[
  \text{MISE($\cdot$)} = E \bigg[\int_{0}^{1}\{\cdot-H(w)\}^{2}\mathrm{d}w\bigg].
\]

The bivariate extreme value distribution $G$ with logistic dependence structure is defined by
\[
  G(x,y) = \exp \{ -(x^{-1/\alpha}+y^{-1/\alpha})^{\alpha} \}, \qquad x,y>0,
\]
in terms of a parameter $\alpha \in (0, 1]$. Smaller values of $\alpha$ yield stronger dependence and the limiting cases $\alpha \to 0$ and $\alpha = 1$ correspond to complete dependence and exact independence, respectively. For $0 < \alpha < 1$, the spectral measure $H_\alpha$ is absolutely continuous with density
\[
  \frac{\mathrm{d}H_\alpha(w)}{\mathrm{d}w} 
  = \frac{1}{2}(\alpha^{-1}-1)\{w(1-w)\}^{-1-1/\alpha}\big\{w^{-1/\alpha}+(1-w)^{-1/\alpha}\big\}^{\alpha-2},
  \qquad 0 < w < 1.
\]
Here we consider $\alpha_{\textsc{i}}=0.8$ and
$\alpha_{\textsc{ii}}=0.4$, with stronger extremal dependence
corresponding to case \textsc{ii}. For each of these two models, 1000
Monte Carlo samples of size 1000 were generated. The thresholds were
set at the empirical quantiles of the radius $\widehat{R}$ given by
$t=75\%,75.5\%,\ldots,99.5\%$ for case \textsc{i} and
$t=50\%,50.5\%,\ldots,99.5\%$ for case \textsc{ii}. 
% We use each of the samples $\{X_i\}_{i=1}^k$ and $\{Y_i\}_{i=1}^k$ to
% estimate the margins parametrically, so that
% \begin{equation*}
%   \widehat{X}_i = \bigg\{1+ \widehat{\xi}_x\bigg(\frac{X_i-\widehat{\mu}_x}{\widehat{\sigma}_x}
%   \bigg) \bigg\}^{1/\widehat{\xi}_x }, \qquad   \widehat{Y}_i = \bigg\{1+ \widehat{\xi}_y\bigg(\frac{Y_i-\widehat{\mu}_y}{\widehat{\sigma}_y}
%   \bigg) \bigg\}^{1/\widehat{\xi}_y}, \qquad i = 1,\hdots,k,
% \end{equation*}
% where $(\widehat{\mu}_x,\widehat{\sigma}_x,\widehat{\xi}_x)$ and $(\widehat{\mu}_y,\widehat{\sigma}_y,\widehat{\xi}_y)$ denote
% the maximum likelihood estimates of the generalized extreme
% value distribution corresponding to each of the margins.
The margins were estimated parametrically, by fitting univariate extreme value distributions using maximum likelihood.

In Figure~\ref{fig:trajectory}, a typical trajectory of the estimators is shown, illustrating the closeness of the maximum Euclidean and empirical likelihood estimators. The good performance of the maximum Euclidean/empirical spectral measure is confirmed by Figure~\ref{fig:mise2}. For larger $k$ (lower threshold $t$), the bias coming from the approximation error in \eqref{eq:H} is clearly visible.

\begin{figure}
\begin{center}
\begin{tabular}{c}
\includegraphics[width=0.45\textwidth]{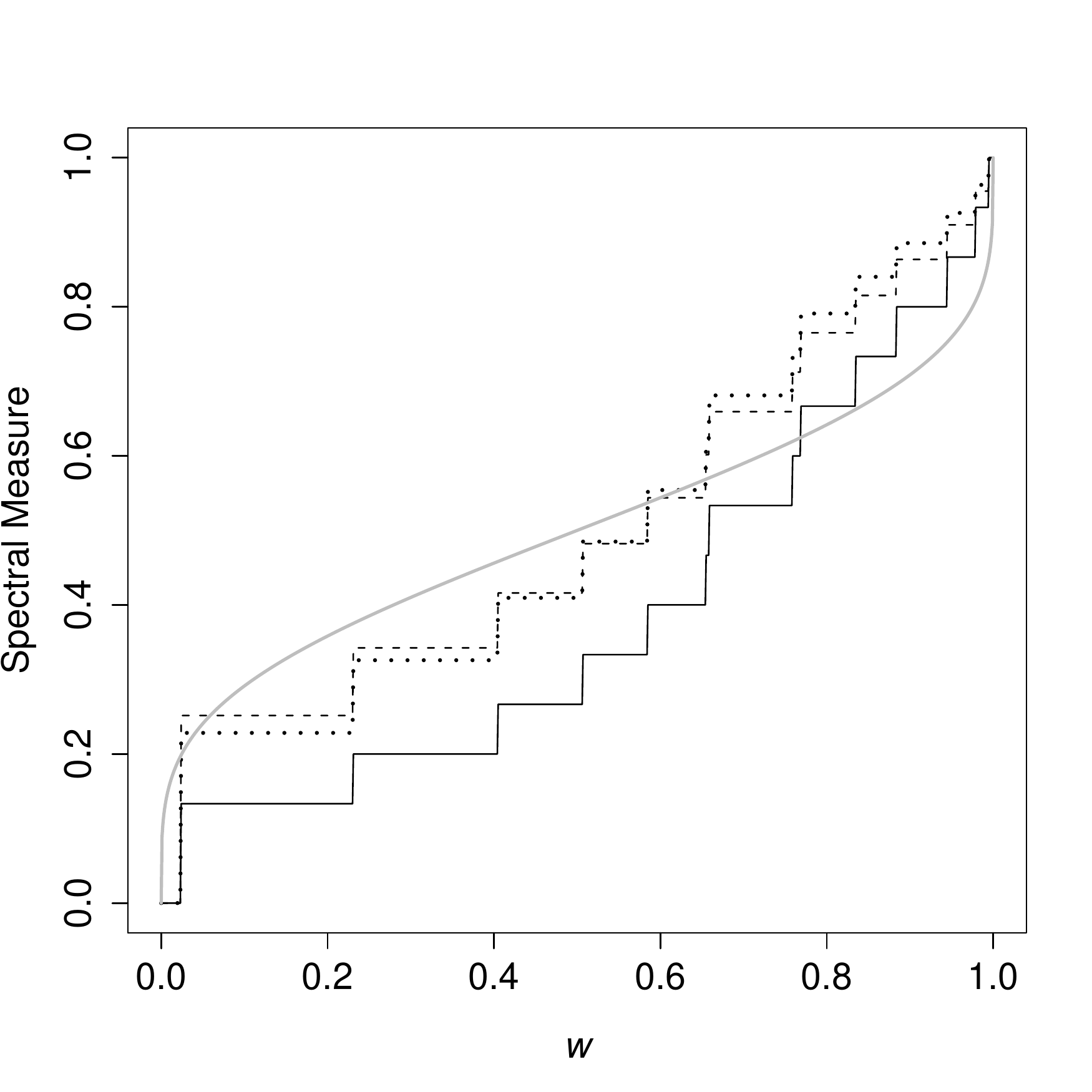} 
\includegraphics[width=0.45\textwidth]{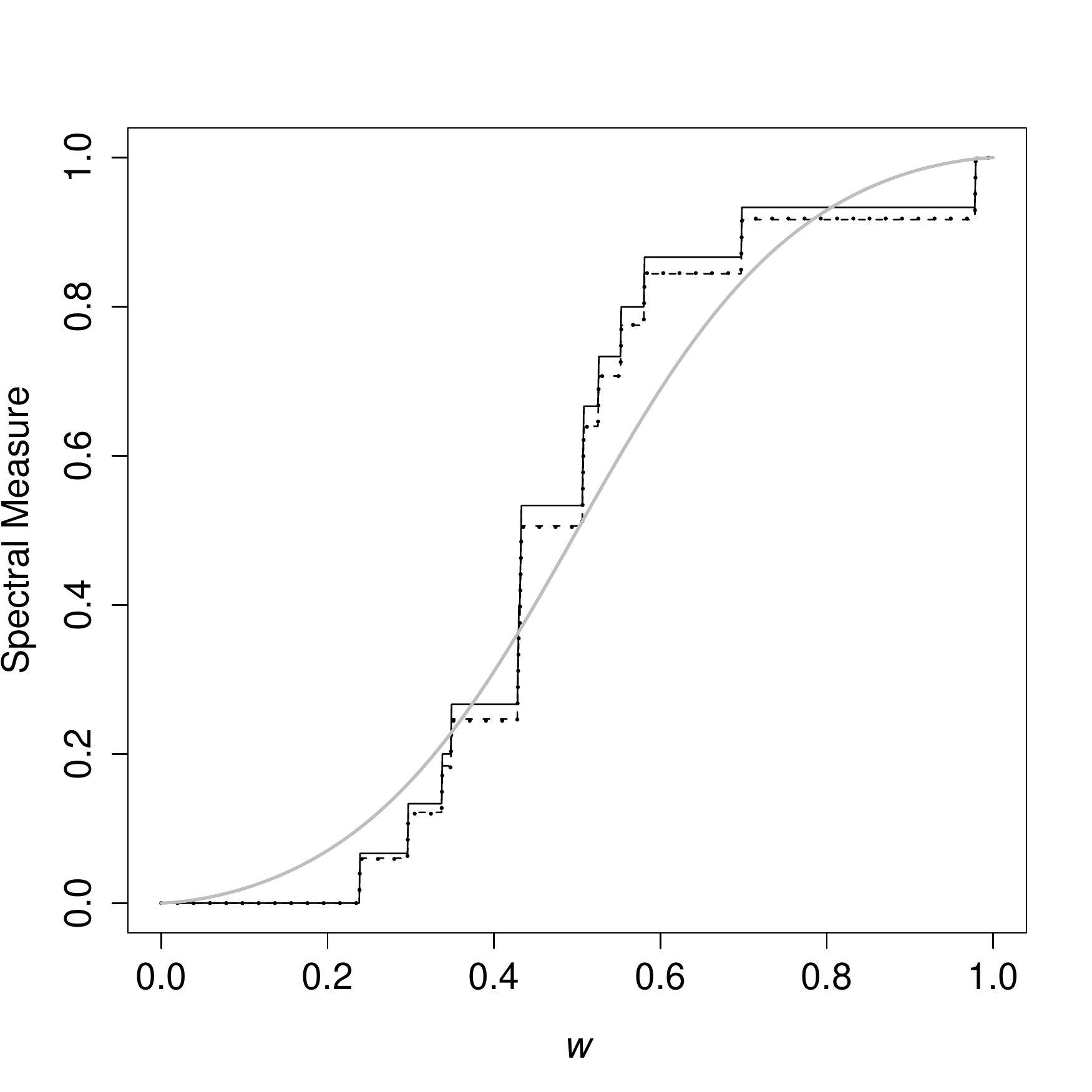} 
\end{tabular}
\caption{\small Examples of trajectories of the empirical spectral measure (solid), the maximum empirical likelihood estimator (dashed), and the maximum Euclidean likelihood estimator (dotted). The solid grey line corresponds to the true spectral measure $H_\alpha$ coming from the bivariate logistic extreme value distribution with parameters $\alpha_{\textsc{i}} = 0.8$ (left) and $\alpha_{\textsc{ii}} = 0.4$ (right). \label{fig:trajectory}}
\end{center}
\end{figure} 

\begin{figure}
\begin{center}
\begin{tabular}{cc}
\includegraphics[width=0.45\textwidth]{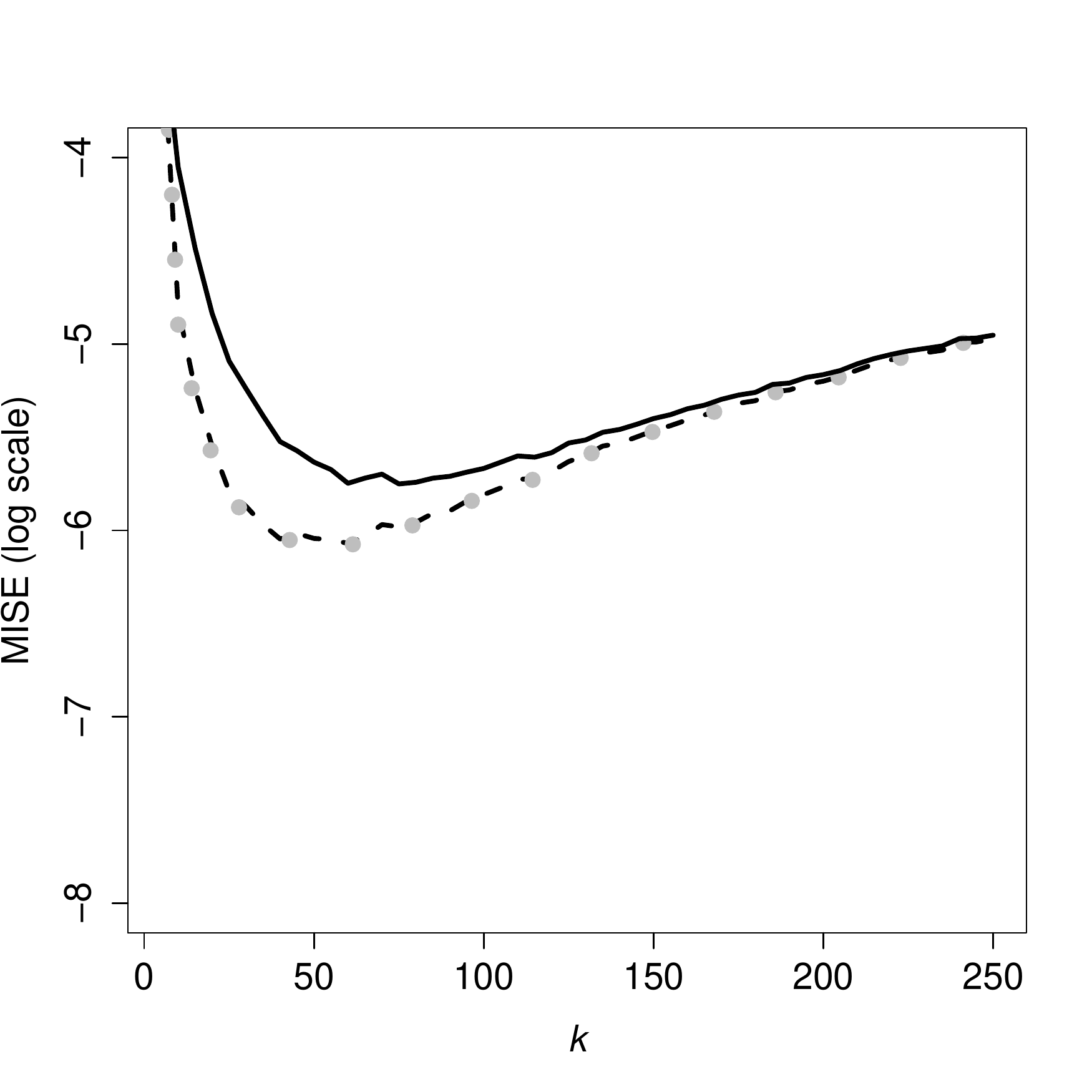}&
\includegraphics[width=0.45\textwidth]{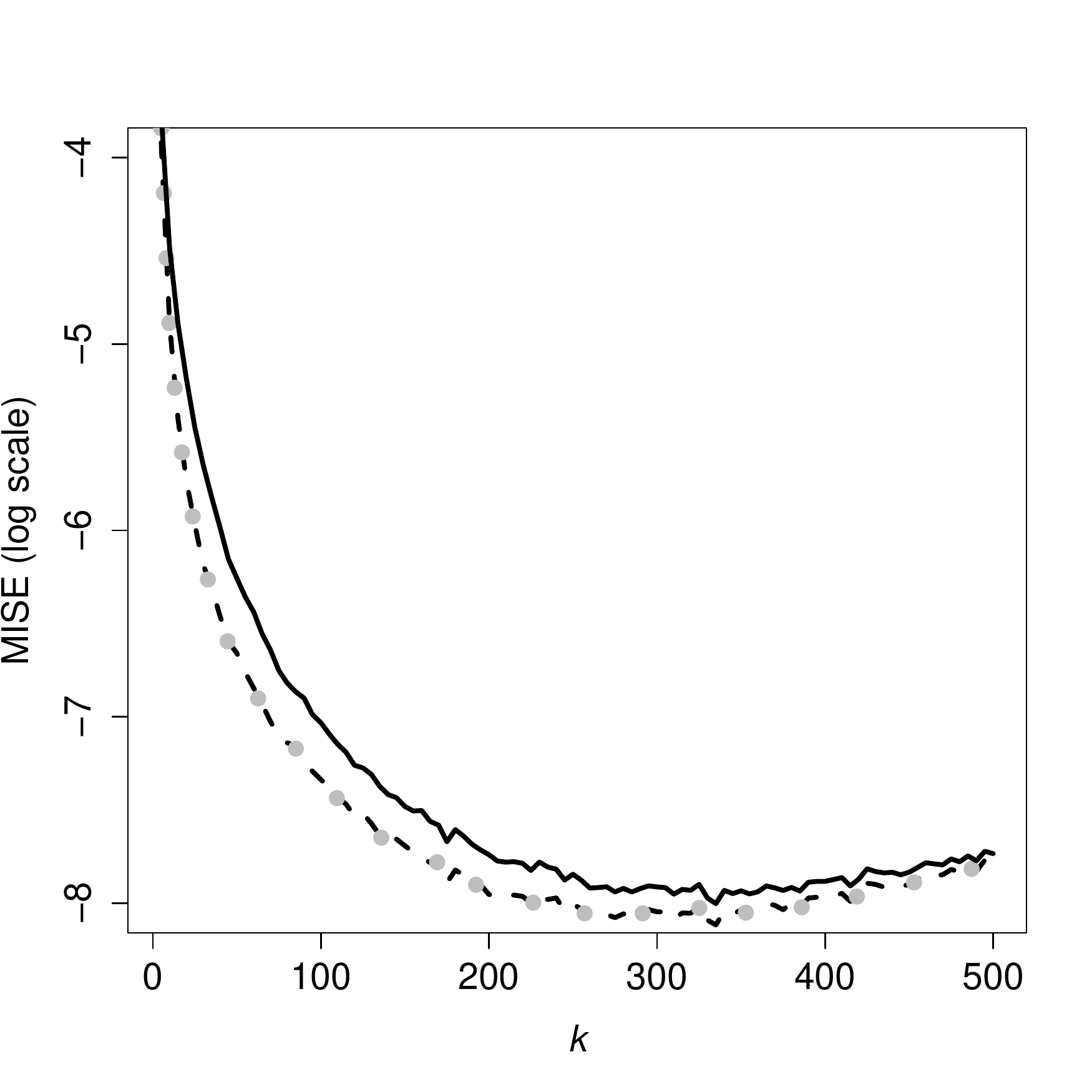}
\end{tabular}
\caption{\small Logarithms of the mean integrated squared errors of the spectral measure estimates based on 1000 samples of size 1000 from the bivariate logistic extreme value distribution with parameters $\alpha_{\textsc{i}} = 0.8$ (left) and $\alpha_{\textsc{ii}} = 0.4$ (right). The solid, dashed, and dotted lines correspond to the empirical spectral measure, the maximum empirical likelihood estimator, and the maximum Euclidean likelihood estimator, respectively. \label{fig:mise2}}
\end{center}
\end{figure} 

Numerical experiments in \citet[][\S 5.2]{Einmahl:2009p327} show that the presence of atoms at the endpoints $0$ and $1$ has an adverse effect on maximum
Euclidean/empirical likelihood estimates, and this finding is further confirmed by \citet[][\S 7.1]{guillotte2011non}. Indeed, by construction, the pseudo-angles $\hat{W}_i$ will never be exactly $0$ or $1$. The empirical spectral measure therefore does not assign any mass at $0$ and $1$, and this situation cannot be remedied by the maximum empirical or Euclidean likelihood estimators, having the same support as the empirical spectral measure.

The weights $\hat{p}_i$ of the maximum Euclidean likelihood estimator
can be negative. However, as can be seen from Figure
\ref{fig:proportion}, the weights tend to be positive overall, except
for extremely high thresholds, with the proportion of negative weights
being smaller in case \textsc{i}. This suggests that the closer we
get to exact independence, the lower the proportion of negative weights.

\begin{figure}[H]
  \begin{center}
  \begin{tabular}{c}
    \includegraphics[width=0.45\textwidth]{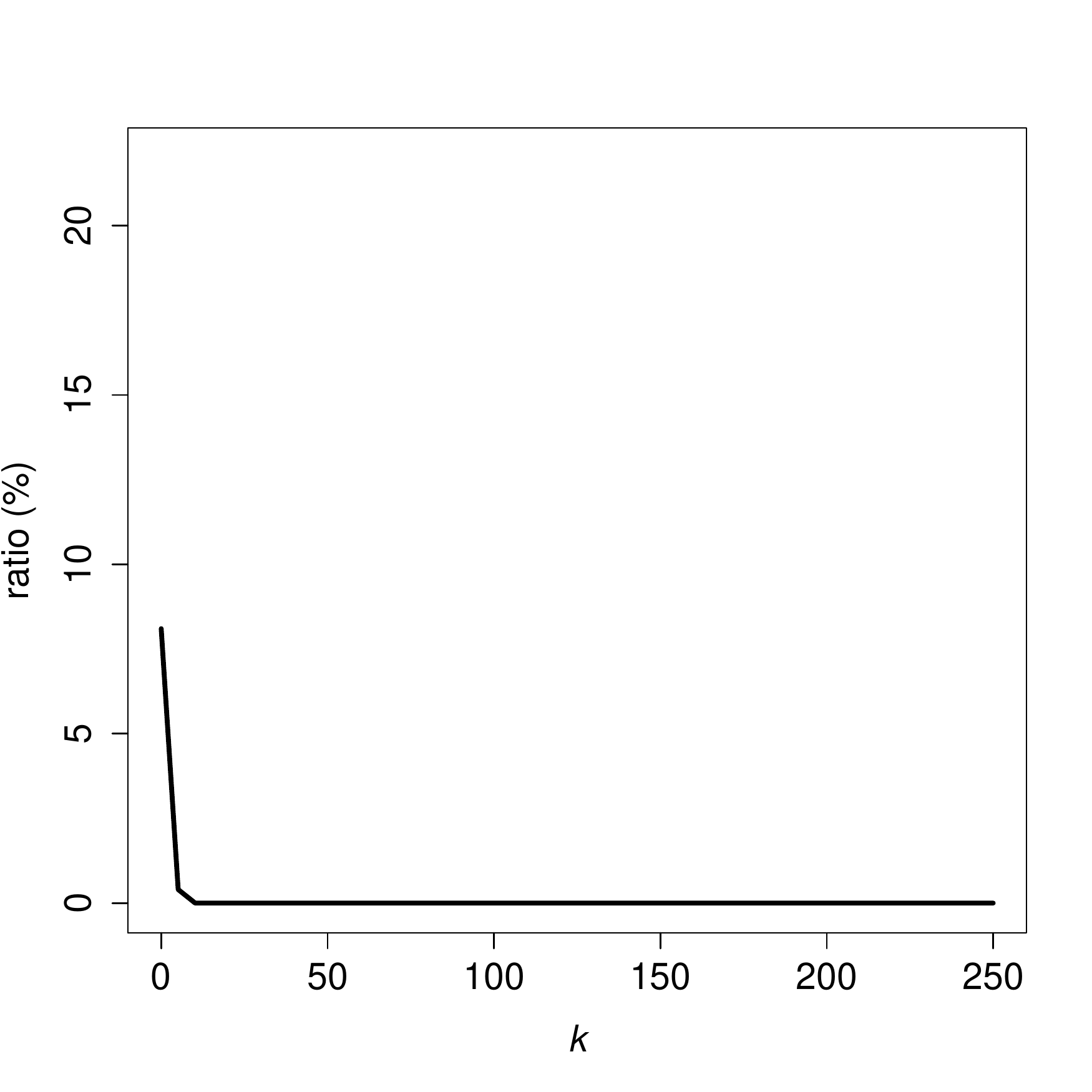} 
    \includegraphics[width=0.45\textwidth]{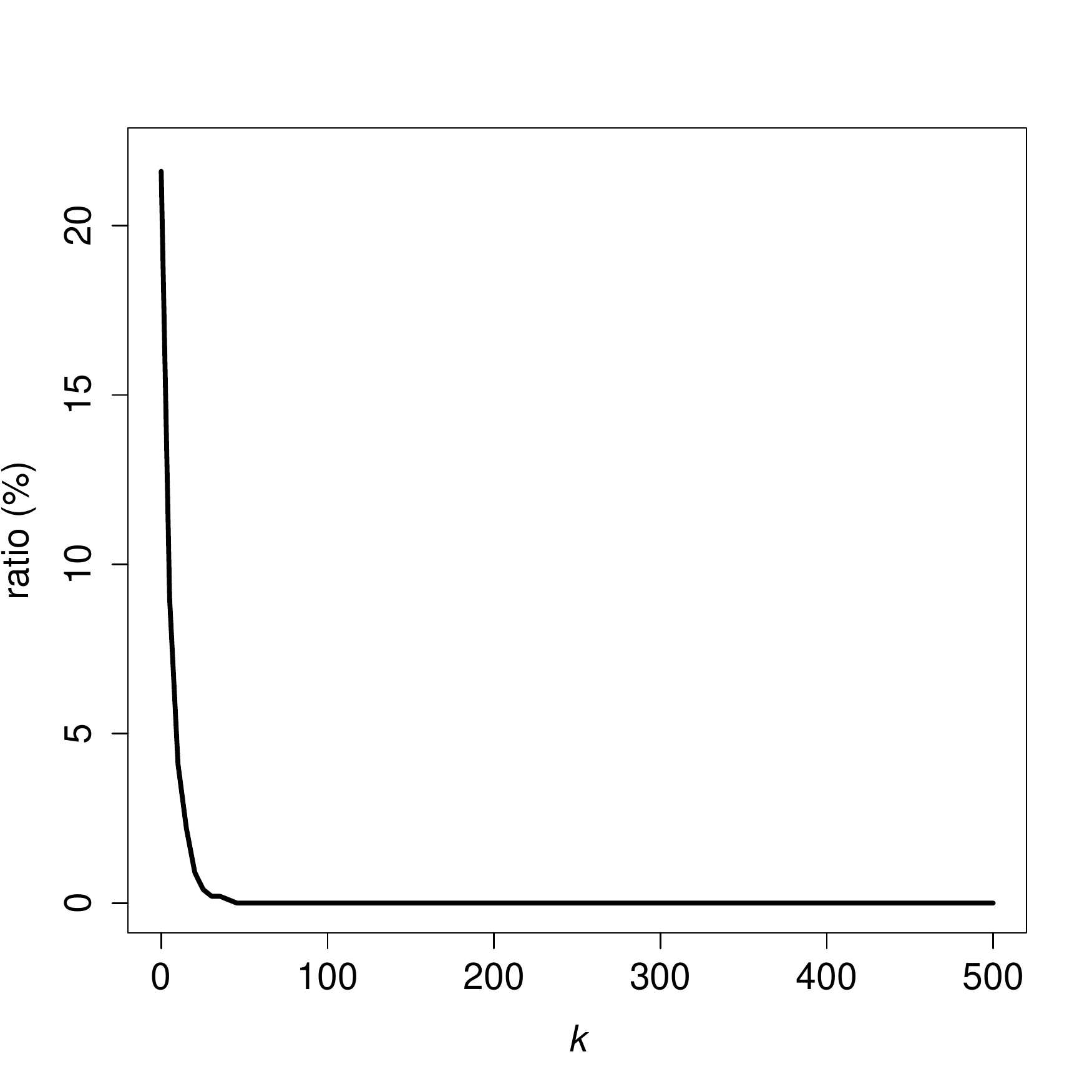} 
  \end{tabular}
  \caption{\small Proportion of negative weights based on 1000 samples of size 1000 
    from the bivariate logistic extreme value distribution with
    parameters $\alpha_{\textsc{i}} = 0.8$ (left) and $\alpha_{\textsc{ii}} = 0.4$ (right). \label{fig:proportion}}
\end{center}
\end{figure}

\section{Extreme Temperature Data Analysis}
\label{sec:temp}

\subsection{Data Description and Preliminary Considerations}
The data were gathered from the Long-term Forest Ecosystem Research
database, which is maintained by LWF (Langfristige Wald\"okosystem-Forschung), and consist of daily average meteorological measurements made in
Beatenberg's forest in the canton of Bern, 
Switzerland. More information on these data can be found at

\begin{verbatim}
http://www.wsl.ch
\end{verbatim}

\noindent and for an extensive study see
\cite{Ferrez:Davison:Rebetez:2011}. Two time series 
of air temperature data are available:
One in the open field and the other
in a nearby site under the forest cover.
Our aim is to understand how the extremes in the open relate with
those under the canopy; comparison of open-site and below-canopy
climatic conditions is a subject of considerable interest 
in Forestry and Meteorology \cite[][]{Renaud:Rebetez:2009, Ferrez:Davison:Rebetez:2011, Renaud:2011}.
The raw data are plotted in Figure~\ref{raw}, but before we are able to measure extremal dependence of open air and
forest cover temperatures we first need to preprocess the data. The preprocessing step is
the same as in \citet[][\S 3.1]{Ferrez:Davison:Rebetez:2011} and further
details can be found in there.

\begin{figure}[H]
\begin{center}
\begin{tabular}{cc}
\includegraphics[width=.45\textwidth]{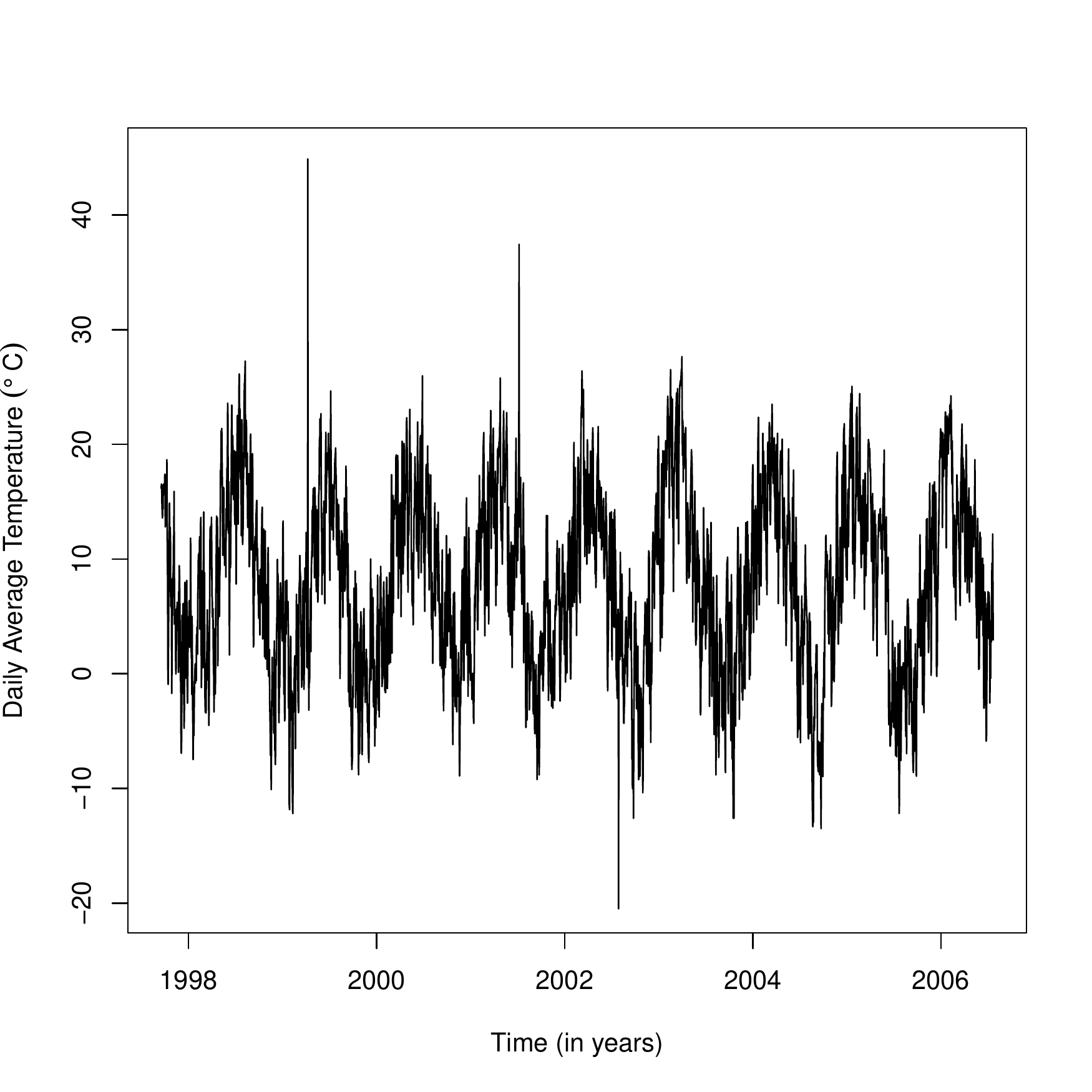}&
\includegraphics[width=.45\textwidth]{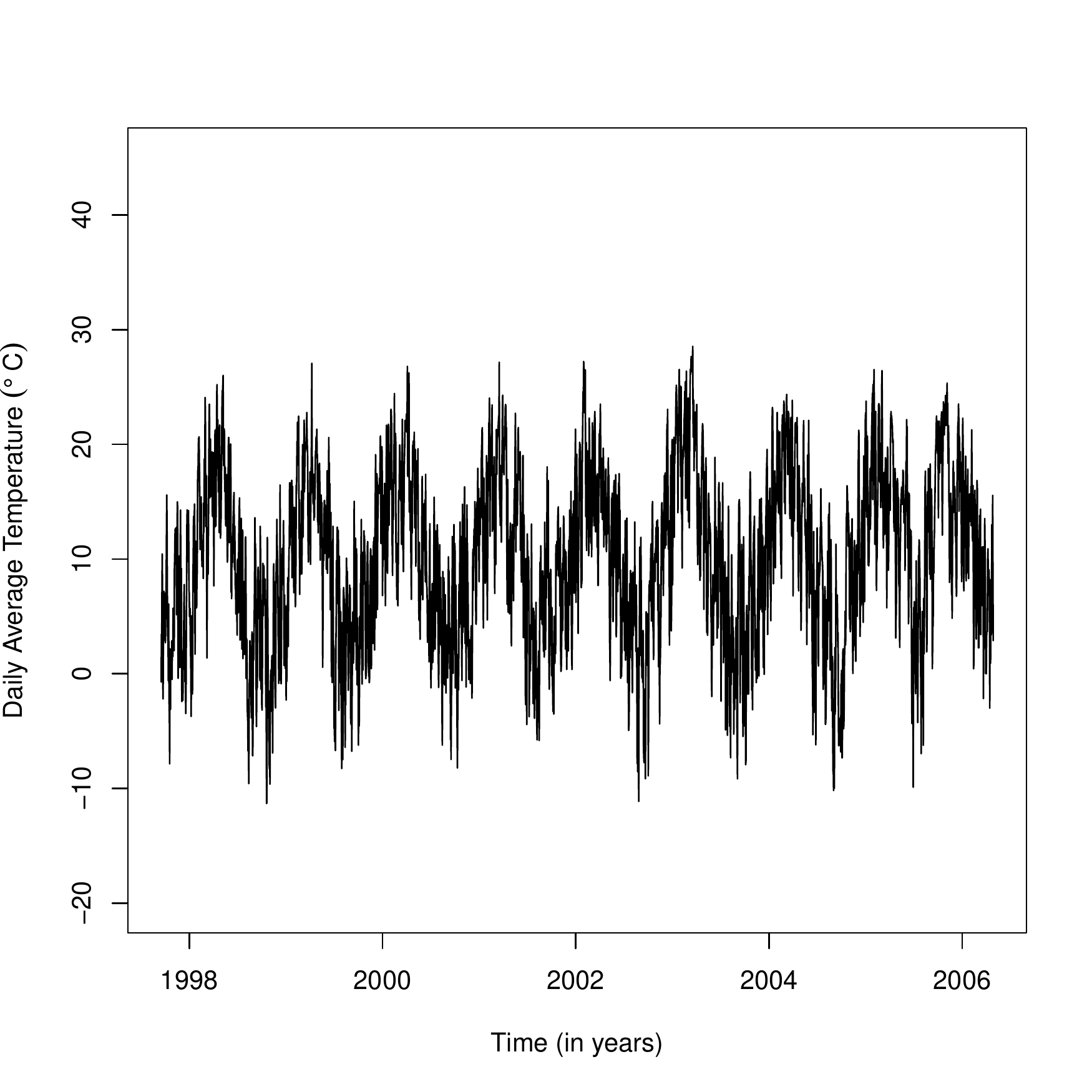} 
\end{tabular}
\caption{\small Daily average air temperatures from the meteorological station in Beatenberg's forest: under the forest cover (left) and in the open field (right). \label{raw}}
\end{center}
\end{figure} 

We consider daily maxima of the residual series that result from removal
of the annual cycle in both location and scale, and we then take
the residuals at their 98\% quantile; hence the threshold boundary is
defined as $U = \{(x,y)\in[0,\infty)^2: x+y =
\widehat{F}^{-1}_{\widehat{R}}(0.98) = 105.83\}$, so that there are $|U|=k=57$
exceedances. 

\begin{figure}[H]
\centering
\includegraphics[width=9cm]{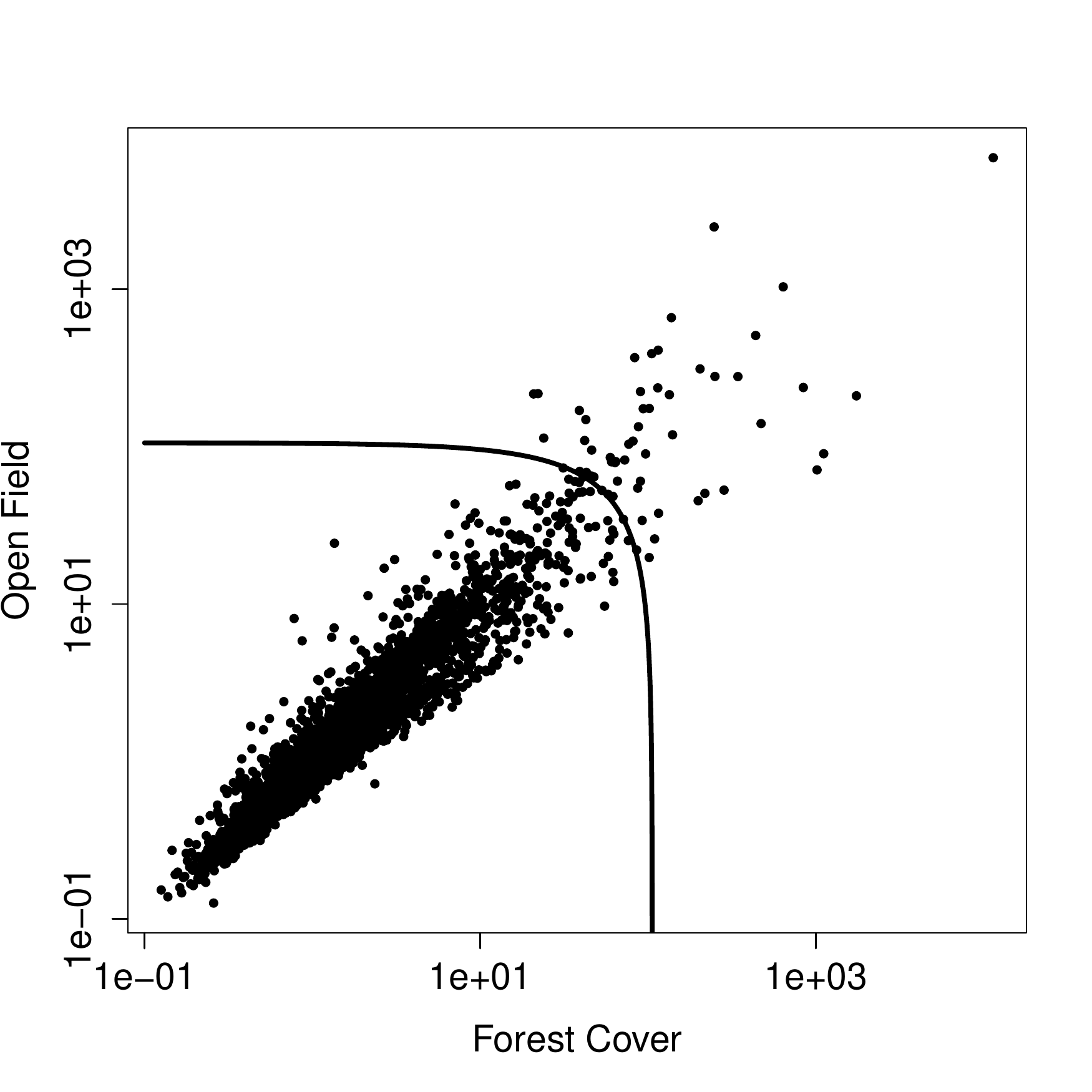} 
\caption{\small Scatterplot of air temperature data after transformation to unit Fr\'echet scale; the solid line corresponds to the boundary threshold in the log-log scale, with both axes being logarithmic. \label{open.vs.cover}}
\end{figure} 

\begin{figure}[H]
\begin{center}
\begin{tabular}{cc}
\includegraphics[width=.45\textwidth]{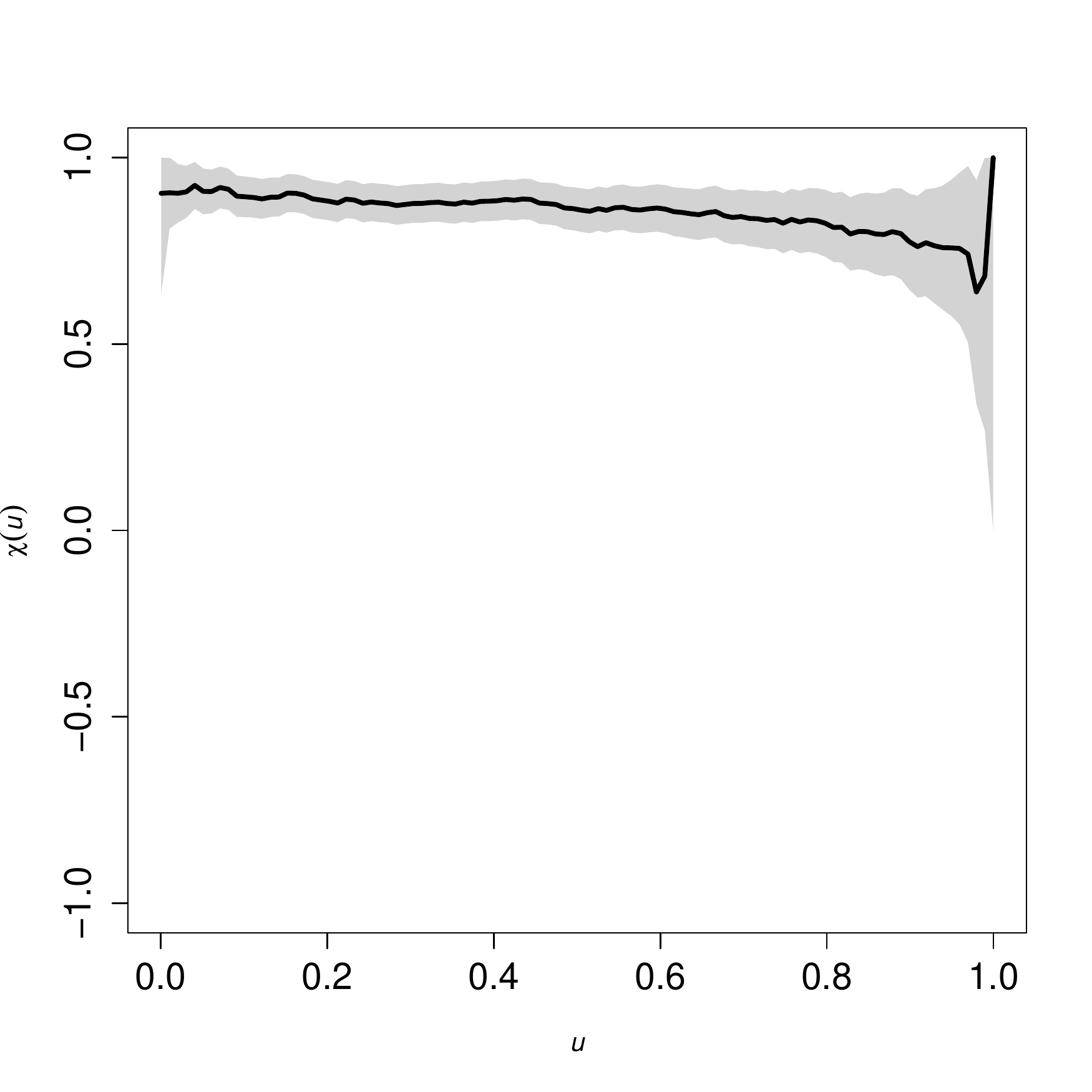}& 
\includegraphics[width=.45\textwidth]{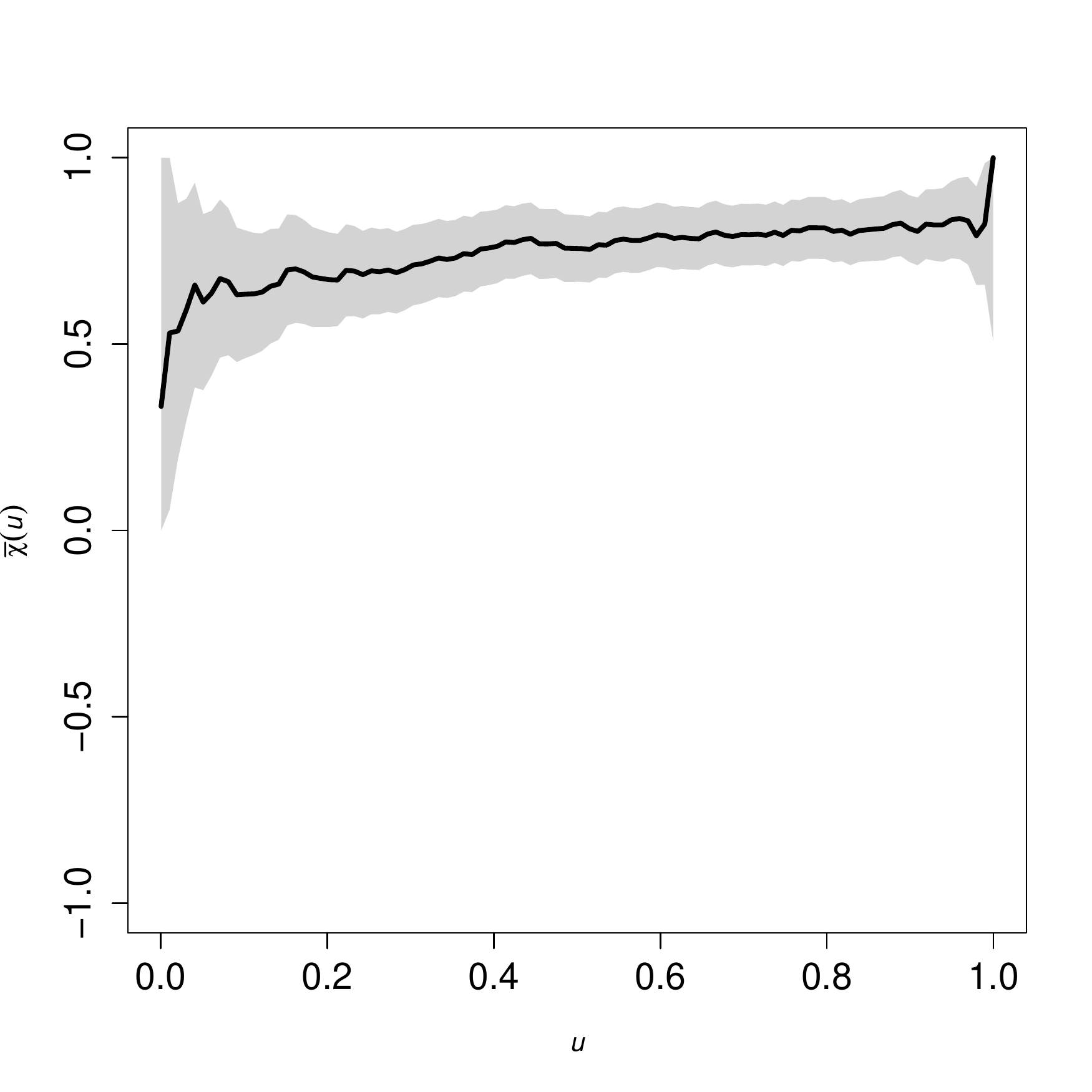} 
\end{tabular}
\caption{\small Empirical estimates and 95\% pointwise confidence intervals for $\chi(u)$ (left) and $\overline{\chi}(u)$ (right) as a function of $u \in (0,1)$. \label{chis}}
\end{center}
\end{figure} 

The dependence between open field and forest cover
temperatures can be observed in Figure~\ref{open.vs.cover}, where we
plot a log-log scale scatterplot of the unit Fr\'echet data, and where
we note that after log transformation the linearity of the threshold boundary
$U$ is perturbed.

Spectral measures are only appropriate for modeling asymptotically
dependent data. Although this issue has been already
addressed in \citet[][]{Ferrez:Davison:Rebetez:2011}, for
exploratory purposes we present in Figure~\ref{chis} the
empirical estimates of the dependence coefficients $\chi(u)$ and $\overline{\chi}(u)$, for $0 < u < 1$, defined in \cite{coles:heffernan:tawn:1999} as
\begin{align*}
  \chi(u) &= 2-\log \frac{\Pr[ F_X(X) < u, F_Y(Y) < u ]}{\log \Pr[ F_X(X) < u ]}, &
  \overline{\chi}(u) &= \frac{2\log(1-u)}{\log \Pr[F_X(X)>u, F_Y(Y)>u]} - 1.
\end{align*}
Although these plots fail to have a clear-cut interpretation given the large uncertainty entailed in the estimation, the point estimates seem to be consistent with asymptotic dependence as already noticed by \citet[][]{Ferrez:Davison:Rebetez:2011}.

\subsection{Extremal Dependence of Open Air and Forest Cover Temperatures}

We now apply the maximum Euclidean likelihood estimator to measure extremal dependence of open air and forest cover temperatures. The estimated spectral measure is shown in Figure~\ref{fig:estimates1}. All weights are positive, i.e.~$\hat{p}_i>0$, for $i=1,\hdots,57$.

By construction, the estimate of the spectral measure is discrete. A smooth version which still obeys the moment constraint \eqref{constraints} can easily obtained by smoothing the maximum Euclidean or empirical likelihood estimator with a Beta kernel. Related ideas are already explored in \cite{Hall:Presnell:1999} and \cite{Chen:1997}. Details are given in Appendix~\ref{app:smooth}. 

A cross-validatory procedure was used to select the bandwidth, yielding a concentration parameter of $\nu \approx 163$. Numerical experiments in \cite{Warchol2012} suggest that convoluting empirical likelihood-based estimators with a Beta kernel yields a further reduction in mean integrated squared error. The Beta kernel even outperforms Chen's kernel \citep{Chen:1999}, which is asymptotically optimal under some conditions \citep{Bouezmarni:Rolin:mixt:2003}, but which is unable to conserve the moment constraint. 

From the smoothed spectral measure, we obtain an estimate of the spectral density and plug-in estimators for the Pickands dependence function $A(w) = 1 - w + 2 \int_0^w H(v) \, \mathrm{d}v$, $w \in [0, 1]$, and the bivariate extreme value distribution in \eqref{BEV}. The estimated spectral density is compared with the fit obtained from the asymmetric logistic model
\begin{multline*}
  H_{\alpha,\psi_1,\psi_2}(w) = \frac{1}{2}
  \bigg[1+\psi_1+\psi_2-\{\psi_1^{1/\alpha}(1-w)^{1/\alpha-1}-\psi_2^rw^{1/\alpha-1}\} \\
  \{\psi_1^{1/\alpha} (1-w)^{1/\alpha}+\psi_2^{1/\alpha}w^{1/\alpha}\}^{1/1/\alpha-1}\bigg], \qquad w \in [0, 1], 
\end{multline*}
with parameter estimates $\widehat{\psi}_1 = 0.78$ (standard error $0.03$), $\widehat{\psi}_2 = 0.90$ ($0.03$) and $\widehat{\alpha} =  0.30$ ($0.02$). The asymmetric logistic model was considered by \cite{Ferrez:Davison:Rebetez:2011} as the parametric model that achieved the ``best overall fit.''

In Figure~\ref{fig:estimates1} we also plot the smooth spectral
measure and corresponding spectral density which are obtained by suitably convoluting the
empirical Euclidean spectral measure with a Beta kernel as described in \eqref{h.smooth} and \eqref{H.smooth}. Since more mass concentrated over 1/2 corresponds to more extremal dependence, and more mass concentrated on 0 and 1 corresponds more independence in the extremes, a rough interpretation for our context is as follows: The lower the shelter ability of the forest, the more mass should be concentrated around 1/2, whereas higher shelter ability corresponds to the case where the spectral measure
gets more mass concentrated at 1; more mass concentrated at 0 suggests relatively more extreme events under the forest cover, suggesting that the forest has the ability to retain heat during extreme events.

\begin{figure}
\begin{center}
\begin{tabular}{cc}
\includegraphics[width=.45\textwidth]{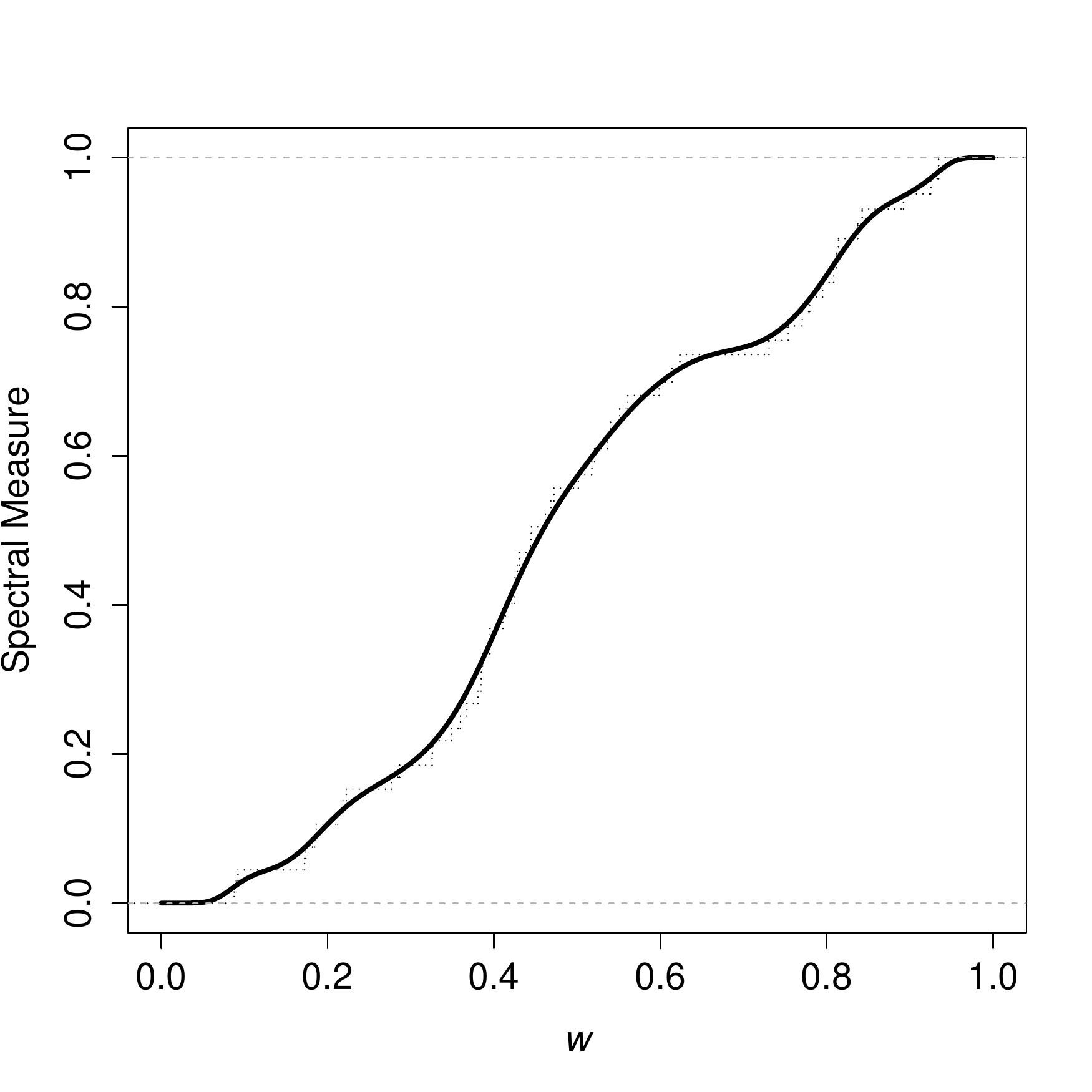}&
\includegraphics[width=.45\textwidth]{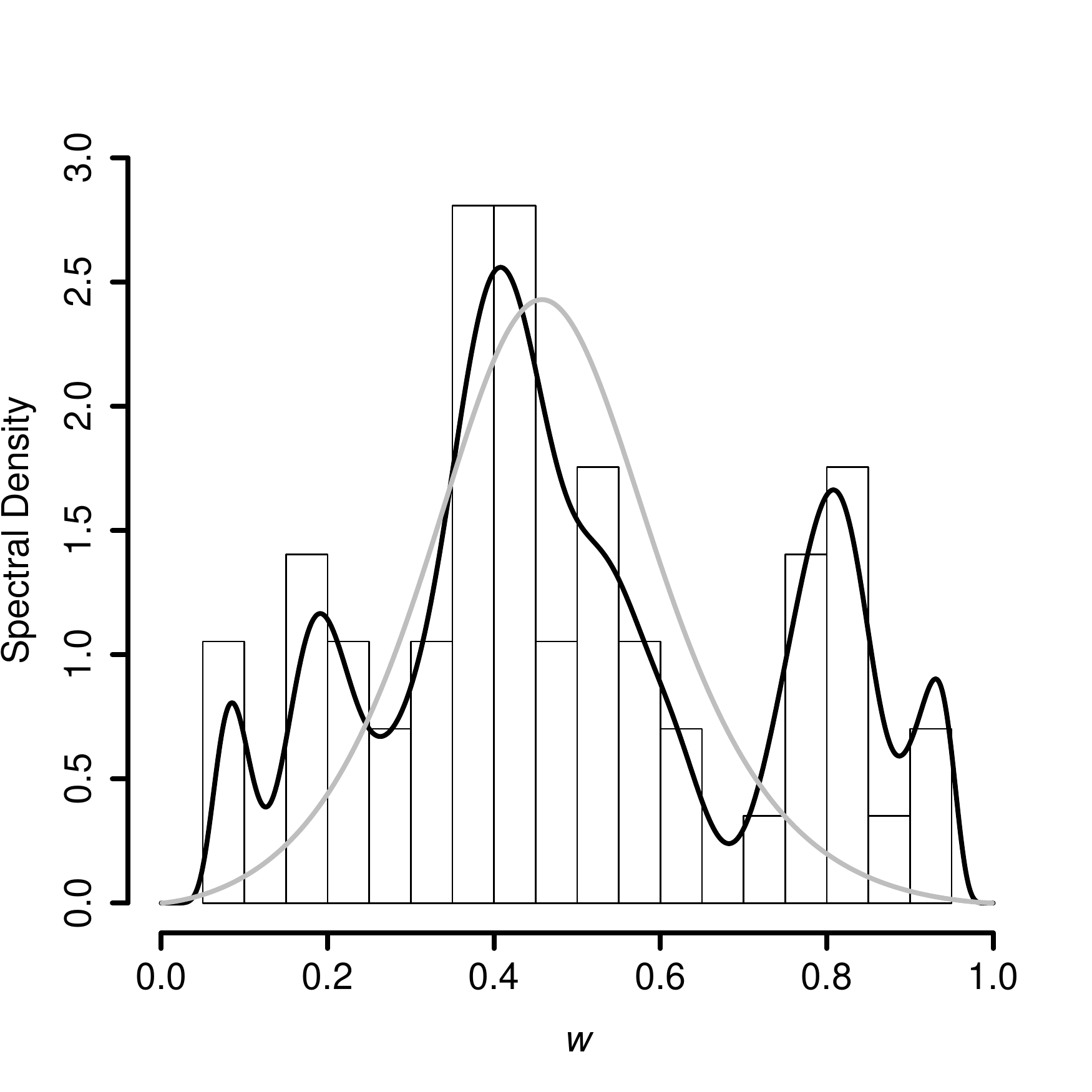}
\end{tabular}
\caption{\small Estimates of the spectral measure (left) and the spectral density (right). Left: the dotted line corresponds to the empirical Euclidean 
    spectral measure, and the solid line corresponds to its smooth version constructed using \eqref{H.smooth}. Right: The solid line corresponds to the
    smooth spectral density obtained from the Euclidean likelihood weights using \eqref{h.smooth}, and the gray line represents the fit from the asymmetric logistic model with $(\widehat{\psi}_1,\widehat{\psi}_2)=(0.78,0.90)$ and $\widehat{\rho} =  0.30$. \label{fig:estimates1}}
\end{center}
\end{figure} 

In Figure~\ref{fig:estimates2}.1 we plot the corresponding Pickands dependence function. More extremal dependence corresponds to lower Pickands dependence functions, and the deeper these are on the right the less frequent are the extreme events under the forest cover relatively to the open field. Our analysis suggests that extreme high temperatures under the forest cover are more frequent than expected from a corresponding parametric analysis. This somewhat surprising finding is already predicted in \citet[Fig.~4]{Ferrez:Davison:Rebetez:2011}. The phenomenon may be due to the ability of some forests to retain heat, acting like a greenhouse, or it may be connected with the way that other features of the forest's structure can alter its microclimate \citep{Renaud:2011}. Along with the Pickands dependence function, we also plot in Figure~\ref{fig:estimates2}.1 the pseudo-angles which provide further evidence of a marked right skewness. 

The joint behavior of temperatures in the open and under forest cover can also be examined from the estimated bivariate extreme value distribution function plotted in Figure~\ref{fig:estimates2}.2, which was constructed by convoluting the empirical Euclidean spectral measure with a Beta kernel as described in \eqref{G.smooth}.

\begin{figure}
\begin{center}
\begin{tabular}{cc}
\includegraphics[width=.45\textwidth]{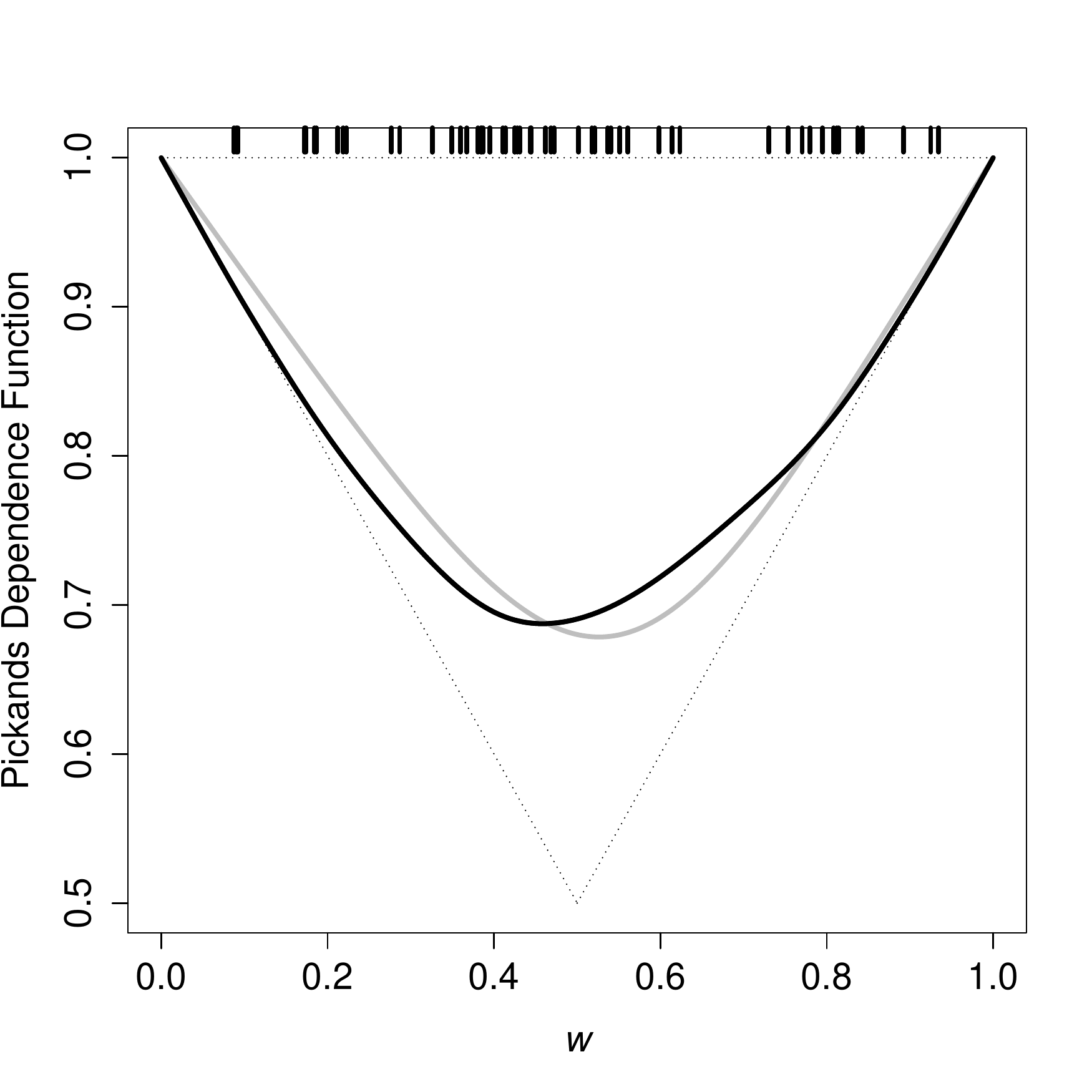}&
\includegraphics[width=.45\textwidth]{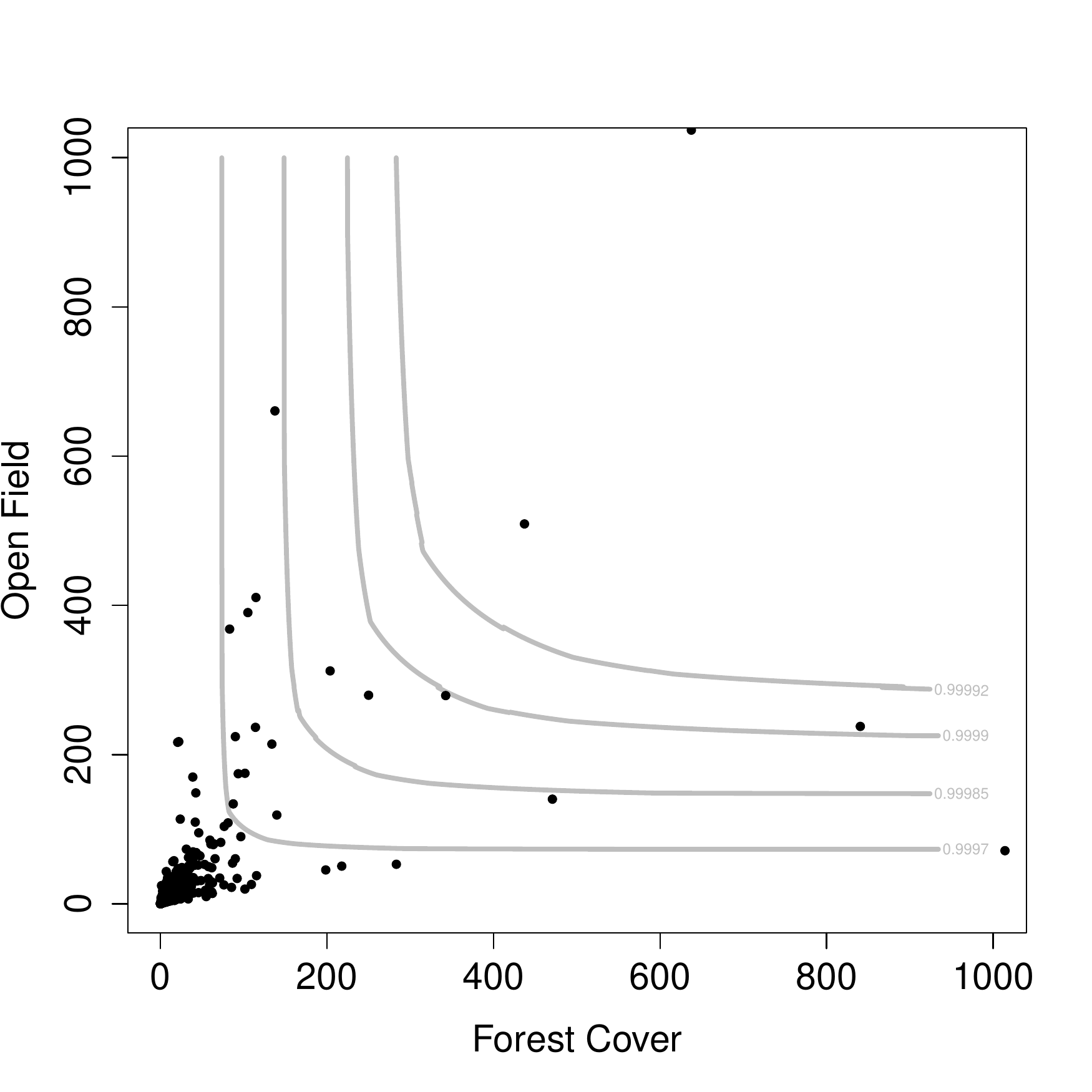}   
\end{tabular}
\caption{\small Estimates of the Pickands dependence function
    and contours of the bivariate extreme value distribution. Left: The
    solid line represents the smooth Pickands dependence function
    obtained from the Euclidean likelihood weights using
    \eqref{A.smooth}, and the gray line represents the fit
    from the asymmetric logistic model with
    $(\widehat{\psi}_1,\widehat{\psi}_2)=(0.78,0.90)$ and
    $\widehat{\rho} =  0.30$. The pseudo-angles are
    presented at the top. Right: The solid gray line represents the smooth bivariate extreme value
    distribution obtained from the Euclidean likelihood weights using
    \eqref{G.smooth}. Air temperature data are plotted on the unit Fr\'echet scale. \label{fig:estimates2}}
\end{center}
\end{figure} 

\section{Discussion}
In this paper we propose a simple empirical likelihood-based estimator for the spectral
measure, whose asymptotic efficiency is comparable to
the empirical likelihood spectral measure of \cite{Einmahl:2009p327}.
The fact that our estimator has the same limit distribution as the 
empirical likelihood spectral measure, suggests that a
more general result may hold for other members of the Cressie--Read
class, of which these estimators are particular cases, similarly to
what was established by \cite{Baggerly:1998} in a context different
from ours. We focus on the spectral measure defined over
the $L_1$-norm, but only for a matter of simplicity, and there is no
problem in defining our estimator for the spectral measure defined
over the $L_p$-norm, with $p \in [1,\infty]$. For real data applications smooth versions of empirical 
the estimator may be preferred, but these can be readily constructed
by suitably convoluting the weights of our empirical likelihood-based method with a kernel
on the simplex. 

\section*{Acknowledgements}
We thank Anthony Davison, Vanda In\'acio, Feridun Turkman, and Jacques Ferrez for
discussions and we thank the editors and anonymous referees for helpful 
suggestions and recommendations, that led to a significant improvement
of an earlier version of this article. Miguel de Carvalho's research
was partially supported by the Swiss National Science
Foundation, CCES project EXTREMES, and by the Funda\c c\~ao para a Ci\^encia e a
Tecnologia (Portuguese Foundation for Science and Technology) through
PEst-OE/MAT/UI0297/2011 (CMA). Johan Segers's research was supported by IAP research network grant No.\ P6/03 of the Belgian government (Belgian Science Policy) and by contract No.\ 07/12/002 of the Projet d'Actions de Recherche Concert\'ees of the Communaut\'e fran\c{c}aise de Belgique, granted by the Acad\'emie universitaire Louvain.

\appendix

\section{Proofs}
\label{app:proofs}

\begin{proof}[Proof of Theorem~\ref{thm:consistency}]
Let $F_n, F \in \mathbb{D}_\Phi$. By Fubini's theorem,
\begin{align*}
  \int_{[0,w]} v \, \mathrm{d}F(v) &= w \, F(w) - \int_0^w F(v) \, \mathrm{d}v, \qquad w \in [0, 1], \\
  \int_{[0,1]} v^2 \, \mathrm{d}F(v) &= 1 - \int_0^1 2v \, F(v) \, \mathrm{d}v,
\end{align*}
and similarly for $F_n$. It follows that $\| F_n - F \|_\infty \to 0$ implies that $\mu_{F_n} \to \mu_F$, $\sigma_{F_n}^2 \to \sigma_F^2$, and $\int_{[0, w]} v \, \mathrm{d} F_n(v) \to \int_{[0, w]} v \, \mathrm{d} F(v)$ uniformly in $w \in [0, 1]$. Hence $\| \Phi(F_n) - \Phi(F) \|_\infty \to 0$. Therefore, the map $\Phi : \mathbb{D}_\Phi \to \ell^\infty([0,1])$ is continuous. The lemma now follows from the fact that $\Phi(H) = H$ together with the continuous mapping theorem, see Theorem 1.9.5 in \cite{vdv:wellner:1996}.
\end{proof}

\begin{proof}[Proof of Theorem~\ref{thm:an}]
Write
\begin{align*}
  \beta_n &= r_n ( \dot{H}_n - H ), & \gamma_n &= r_n ( \hat{H}_n - H ).
\end{align*}
Let $\mathbb{D}_n$ denote the set of functions $f \in \ell^\infty([0, 1])$ such that $H + r_n^{-1} f$ belongs to $\mathbb{D}_\Phi$. Since $H + r_n^{-1} \beta_n = \dot{H}_n$ takes values in $\mathbb{D}_\Phi$, it follows that $\beta_n$ takes values in $\mathbb{D}_n$. Define $g_n : \mathbb{D}_n \to \ell^\infty([0, 1])$ by
\[
  g_n(f) = r_n \{ \Phi( H + r_n^{-1} f ) - H \}.
\]
Observe that
\[
  g_n(\beta_n) = r_n \{ \Phi( \dot{H}_n ) - H \} = \gamma_n.
\]
Further, define the map $g : \mathcal{C}([0, 1]) \to \mathcal{C}([0, 1])$ by
\[
  (g(f))(w) = f(w) - \sigma_H^{-2} \, \int_0^1 f(v) \, \mathrm{d}v \, \int_0^w (1/2 - v) \, \mathrm{d}H(v), \qquad w \in [0, 1].
\]
A straightforward computation shows that if $f_n \in \mathbb{D}_n$ is such that $\| f_n - f \|_\infty \to 0$ for some $f \in \mathcal{C}([0, 1])$, then $\| g_n(f_n) - g(f) \|_\infty \to 0$; note in particular that $f_n(1) = 0$ and thus also $f(1) = 0$. The extended continuous mapping theorem \citep[Theorem~1.11.1]{vdv:wellner:1996} implies that
\[ 
  \gamma_n = g_n(\beta_n) \to g(\beta) = \gamma, \qquad n \to \infty,
\]
as required. Note that we have actually shown that $\Phi$ is Hadamard differentiable at $H$ tangentially to $\mathcal{C}([0, 1])$ with derivative given by $\Phi_F' = g$. The result then also follows from the functional delta method.
\end{proof}

\section{Beta-Kernel Smoothing of Discrete Spectral Measures}
\label{app:smooth}

We only consider the case of the empirical Euclidean spectral
measure using a Beta kernel, but the same applies to the empirical
likelihood spectral measure by replacing the $\widehat{p_i}$ with $\ddot{p}_i$ in \eqref{tilde.p_i}.
The smooth Euclidean spectral density is thus defined as
\begin{equation} \label{h.smooth}
  \widetilde{h}(w) = \sum_{i=1}^k \widehat{p}_i \, \beta\{w; w_i \nu, (1-w_i)\nu\}, \qquad w \in (0,1),
\end{equation}
where $\nu>0$ is the concentration parameter (inverse of the squared bandwidth, to be chosen via cross-validation) and $\beta(w; p, q)$ denotes the Beta density with parameters $p, q > 0$. The corresponding smoothed spectral measure is defined as
\begin{equation} \label{H.smooth}
\widetilde{H}(w) = \int_0^w \widetilde{h}(v) \, \mathrm{d}v = \sum_{i=1}^k
\widehat{p}_i \mathcal{B}\{w; w_i \nu, (1-w_i) \nu\}, \qquad w \in [0,1],
\end{equation}
where $\mathcal{B}(w;p,q)$ is the regularized incomplete beta function, with $p,q >0$.
Since
\begin{equation}\label{check}
  \int_0^1 w \, \widetilde{h}(w) \, \mathrm{d}w = \sum_{i=1}^k \widehat{p}_i
  \bigg\{\frac{\nu w_i}{\nu w_i + \nu(1-w_i)} \bigg\} = \sum_{i=1}^k \widehat{p}_i w_i= 1/2,
\end{equation}
the moment constraint is satisfied. Plug-in estimators for the Pickands
dependence function and the bivariate extreme value distribution
follow directly from
\begin{align}
    \widetilde{A}(w) &= 1-w + 2 \sum_{i=1}^k \widehat{p}_i \int_0^w \mathcal{B}\{u; w_i \nu,
    (1-w_i) \nu\}\mathrm{d}u, \qquad w \in [0,1], \label{A.smooth} \\
    \widetilde{G}(x,y) &= \exp\left\{-\frac{2}{k}\sum_{i=1}^k
    \widehat{p}_i \int_0^1 \max\bigg(\frac{u}{x},\frac{1-u}{y}\bigg)
    \beta\{u; w_i \nu, (1-w_i) \nu\}\mathrm{d}u\right\}, \qquad x,y>0 \label{G.smooth}.
\end{align}

\small

\end{document}